\newtheorem{thm}{Theorem}[section]
\newtheorem{prop}[thm]{Proposition}
\newtheorem{lem}[thm]{Lemma} 
\newtheorem{cor}[thm]{Corollary}
\newtheorem{conjecture}[thm]{Conjecture}
\theoremstyle{definition}
\newtheorem{example}[thm]{Example}
\theoremstyle{definition}
\newtheorem{remark}[thm]{Remark}
\long\def\symbolfootnote[#1]#2{\begingroup%
\def\thefootnote{\fnsymbol{footnote}}\footnote[#1]{#2}\endgroup}
\DeclareMathOperator{\Rips}{\mathrm{Rips}}
\DeclareMathOperator{\obj}{obj\,}
\newcommand{\gr}[0]{\mathrm{gr}}
\newcommand{\free}[0]{\mathrm{fr}}
\newcommand{\fl}[0]{\mathrm{fl}}
\newcommand{\z}[0]{a}
\newcommand{\y}[0]{a}
\renewcommand{\b}[0]{b}
\newcommand{\Gen}[0]{g}
\DeclareMathOperator{\im}{im}
\newcommand{\Z}[0]{\mathbb Z}
\newcommand{\R}[0]{\mathbb R}
\newcommand{\RCat}[0]{\mathbf R}
\newcommand{\Top}[0]{\mathbf{Top}}
\newcommand{\Vect}[0]{\mathbf{Vect}}
\newcommand{\NN}[0]{\mathbb N}
\newcommand{\Q}[0]{\mathbb Q}
\newcommand{\A}[0]{\mathcal A}
\newcommand{\B}[0]{\mathcal B}
\newcommand{\C}[0]{\mathcal C}
\newcommand{\D}[0]{\mathcal D}
\newcommand{\E}[0]{\mathcal E}
\newcommand{\F}[0]{\mathcal F}
\newcommand{\V}[0]{\mathcal V}
\newcommand{\I}[0]{\mathcal I}
\newcommand{\ex}[0]{\mathrm Ex}
\newcommand{\longsup}[0]{u}
\newcommand{\translation}[2]{\varphi_{#1}^{#2}}
\renewcommand{\cong}[0]{\simeq}
\newcommand{\nmod}{n\mathbf{\textup{-}Mod}}
\newcommand{\nfilt}{n\mathbf{\textup{-}Filt}}
\newcommand{\CS}{n\mathbf{\textup{-}Fun}}
\newcommand{\onemod}{1\mathbf{\textup{-}Mod}}
\newcommand{\PFmod}{\mathrm{f1}\mathbf{\textup{-}Mod}}
\newcommand{\id}{\mathrm{Id}}
\newcommand{\cells}[0]{\mathrm{cells}}
\newcommand{\W}[0]{\mathcal W}
\newcommand{\Y}[0]{\mathcal Y}
\newcommand{\FS}[0]{{\mathcal S}}
\newcommand{\newspan}[0]{{\rm span}\,}
\renewcommand{\th}[0]{\mathrm{th}}
\renewcommand{\subsubsection}[1]{\paragraph{#1\rm{.}}}
\newcommand{\nparagraph}[1]{\paragraph{#1\rm{.}}}
\newcommand{\pfd}{p.f.d.\@\xspace}
\begin{document}


  \title{The Theory of the Interleaving Distance on Multidimensional Persistence Modules}\date{}
\author{Michael Lesnick\footnote{mlesnick@ima.umn.edu.
} \\ Institute for Mathematics and its Applications
}
\maketitle
\symbolfootnote[0]{AMS Subject Classification 55, 68.}
\symbolfootnote[0]{Key phrases: multidimensional persistence, stability of persistent homology, persistence modules,}
\symbolfootnote[0]{interleavings, algebraic stability, isometry theorem.}
{\abstract {In 2009, Chazal et al. introduced \emph{$\epsilon$-interleavings} of persistence modules.  $\epsilon$-interleavings induce a pseudometric $d_I$ on (isomorphism classes of) persistence modules, the {\it interleaving distance}.  The definitions of $\epsilon$-interleavings and  $d_I$ generalize readily to multidimensional persistence modules.  In this paper, we develop the theory of multidimensional interleavings, with a view towards applications to topological data analysis.  

We present four main results.  First, we show that on 1-D persistence modules, $d_I$ is equal to the bottleneck distance $d_B$.  This result, which first appeared in an earlier preprint of this paper, has since appeared in several other places, and is now known as the \emph{isometry theorem}.

\begin{sloppypar}
Second, we present a characterization of the $\epsilon$-interleaving relation on multidimensional persistence modules.  This expresses transparently the sense in which two $\epsilon$-interleaved modules are algebraically similar.  
\end{sloppypar}
Third, using this characterization, we show that when we define our persistence modules over a prime field, $d_I$ satisfies a universality property.  This universality result is the central result of the paper.  It says that $d_I$ satisfies a stability property generalizing one which $d_B$ is known to satisfy, and that in addition, if $d$ is any other pseudometric on multidimensional persistence modules satisfying the same stability property, then $d\leq d_I$.  We also show that a variant of this universality result holds for $d_B$, over arbitrary fields. 

Finally, we show that $d_I$ restricts to a metric on isomorphism classes of finitely presented multidimensional persistence modules. 

}}   

\section{Introduction}

\subsection{Background and Motivation}\label{Sec:Background}
\paragraph{Persistent Homology}

Persistent homology is a topological tool for studying the global, non-linear, geometric features of data.  In the last decade and a half, it has been applied widely \cite{edelsbrunner2010computational,Carlsson2014Topological} and has been the subject of a large body of theoretical work.  

Persistent homology provides algebraic invariants, called \emph{persistence modules}, of a variety of types of data, including finite metric spaces and $\R$-valued functions.  Let $\RCat$ denote the poset category of real numbers, and let $\Vect$ denote the category of vector spaces over some fixed field $k$.  We define a persistence module to be a functor $\RCat\to \Vect$.

To construct a persistence module from a data set, we first associate to our data a \emph{filtration}, i.e., a functor $\F:\RCat\to \Top$ that maps each element of $\hom(\RCat)$ to an inclusion.  For example, if our data is an $\R$-valued function $\gamma:T\to \R$, for $T$ a topological space, we may take $\F$ to be the \emph{sublevelset filtration $\FS(\gamma)$}, defined by
\[\FS(\gamma)_t=\{y\in T\mid \gamma(y)\leq t\},\quad t\in \R.\]
Since $\FS(\gamma)_s\subset \FS(\gamma)_t$ whenever $s\leq t$, this indeed gives a filtration.

Letting $H_i:\Top\to \Vect$ denote the $i^{th}$ singular homology functor with coefficients in $k$, we obtain a persistence module $H_i \F$ for any $i\geq 0$.  $H_i \F$ algebraically encodes geometric information about our data.

 The structure theorem for persistence modules of \cite{crawley2012decomposition} tells us that if $M$ is a persistence module whose vector spaces are finite dimensional, then $M$ decomposes in an essentially unique way into simple indecomposables called \emph{interval persistence modules}; see \cref{Sec:IsometrySection}.  The interval persistence modules are parameterized by the non-empty intervals in $\R$.  Hence, we may associate to $M$ a collection of intervals $\B_M$ which indexes the indecomposables of $M$.  We call $\B_M$ the \emph{barcode of $M$}.  

We usually work with the persistence modules $H_i \F $ by way of their barcodes; we regard each interval in $\B_{H_i \F}$ as a topological feature of our data, and we interpret the length of the interval as a measure of significance of that feature.  

\paragraph{The Bottleneck Distance}
Both theory and applications of persistent homology make extensive use of pseudometrics on barcodes.  The \emph{bottleneck distance} $d_B$, a readily computed \cite{cohen2007stability} and particularly well-behaved pseudometric on barcodes, is the most common choice.  We give the definition of $d_B$ in \cref{1DPreliminaries}. 
Stability theorems for persistent homology \cite{cohen2007stability,chazal2009proximity,chazal2009gromov} and theorems about inferring persistent homology from point cloud data \cite{chazal2009analysis, chazal2009persistence} are usually formulated using $d_B$.  Moreover, many applications of persistent homology to shape comparison and related tasks \cite{verri1993use,carlsson2004persistence,carlsson2006algebraic,biasotti2008multidimensional,cohen2010lipschitz} rely in an essential way on computations of $d_B$ and its variants.

Via the correspondence between persistence modules and their barcodes, we can regard $d_B$ as a pseudometric on persistence modules.  

\subsubsection{Multidimensional Persistent Homology}

In 2006, the authors of \cite{carlsson2009theory} introduced a multidimensional generalization of persistent homology.  Whereas ordinary persistent homology maps filtrations to persistence modules, multidimensional persistent homology maps \emph{$n$-dimensional filtrations} to \emph{$n$-dimensional persistence modules}, in essentially the same way.

For $n\geq 1$, define a partial order on $\R^n$ by taking $(a_1,\ldots,a_n)\leq (b_1,\ldots,b_n)$ if and only if $a_i\leq b_i$ for all $i$, and let $\RCat^n$ denote  the associated poset category.  We define an $n$-dimensional filtration $\F$ to be a functor $\RCat^n\to \Top$ that maps each element of $\hom(\RCat^n)$ to an inclusion, and we define an $n$-dimensional persistence module to be a functor $\RCat^n\to \Vect$.  In \cref{Sec:PersistenceModules}, we give an equivalent definition of an $n$-D persistence module as a graded module over a  monoid ring.  

As in the 1-D case, for $\F$ an $n$-dimensional filtration, $H_i\F$ is an $n$-dimensional persistence module.

$n$-dimensional filtrations, $n>1$, arise naturally from data in a number of ways.  We briefly describe two of these ways.

\begin{example}\label{Ex:SublevelsetFiltration}
As observed in \cite{frosini1999size,cerri2009multidimensional}, any function $\gamma:T\to \R^n$ on a topological space $T$ gives rise to an \emph{$n$-dimensional sublevelset filtration} $\FS(\gamma)$, defined by \[\FS(\gamma)_a=\{y\in T\mid \gamma(y)\leq a\},\quad a\in \R^n.\]
This generalizes the sublevelset filtration introduced above in the case $n=1$. 

Frequently in topological data analysis, we have several $\R$-valued functions \[\gamma_1,\gamma_2,\ldots,\gamma_n:T\to \R,\] which we can regard as a single function $\gamma:T\to \R^n$.  For example, such ensembles of functions arise naturally in shape and image classification  applications \cite{biasotti2008multidimensional,li2013persistence,adcock2014classification}, \cite[Section 5.2]{Carlsson2014Topological} and in the study of time-varying data \cite{cohen2006vines}.  In these settings, the multidimensional persistent homology of $\FS(\gamma)$ generally encodes much more information about the ensemble of functions $\gamma$ than can be encoded using 1-D persistent homology.
\end{example}

\begin{example}
As explained in \cite{carlsson2009theory}, multidimensional filtrations also arise naturally as invariants of metric spaces.  In topological data analysis, we often study a finite metric space $P$ by considering its \emph{Vietoris-Rips filtration} $\Rips(P)$.  This is the 1-dimensional simplicial filtration defined by taking $\Rips(P)_t$ to be the maximal simplical complex with 0-skeleton $P$ and 1-simplices the edges $[p,q]$ with $d(p,q)\leq 2t$.  

While for any $i\geq 0$, $H_i\Rips(P)$ is stable to perturbations of the metric \cite{chazal2009gromov}, it is highly unstable to the addition and removal of outliers in $P$.  Relatedly, $H_i\Rips(P)$ is insensitive to variation in the density of points in $P$.  

To address these issues, \cite{carlsson2008local} and \cite{carlsson2009theory} suggested that we consider a \emph{codensity function} \[\gamma:P\to \R,\] a function on $P$ whose value is low at dense points and high at outliers \cite{carlsson2009theory,wasserman2004all}.  $\gamma$ can be defined using the metric structure on $P$ alone, using for example a $k$-nearest neighbors density estimate \cite{carlsson2009theory}.  Given $\gamma$, we may define a 2-dimensional filtration $\Rips(\gamma)$ by taking
\[\Rips(\gamma)_{(a,b)}=\Rips(\gamma^{-1}(-\infty,a])_b.\] 
The 2-D persistence modules $H_i\Rips(\gamma)$ are more robust to noise and more sensitive to variations of density than the 1-D persistence modules $H_i\Rips(P)$.  (See also \cite{chazal2011geometric}, which constructs a different 2-D filtration from point cloud data in way that is also robust to noise.)  

More generally, any function $\gamma:P\to \R^{n}$ yields an $(n+1)$-dimensional filtration $\Rips(\gamma)$, in essentially the same way.  There are several other interesting functions $\gamma$ which we can define from the metric structure on $P$ alone.  For example, as suggested in \cite{carlsson2009theory}, we may take $\gamma:P\to \R$ to be an \emph{eccentricity function}, some measure of centrality of points in $P$.  Or we may take $\gamma:P\to \R$ to simply be the distance to a fixed point $p\in P$.  When $\gamma$ is induced from the metric on $P$, $\Rips(\gamma)$ topologically encodes information about the metric; for different choices of $\gamma$, $\Rips(\gamma)$ encodes different kinds of information.  
\end{example}

\paragraph{The Difficulty of Defining Barcodes for Multi-D Persistence Modules}
While multidimensional persistence modules are far richer invariants than their 1-dimensional counterparts, they also are far more complex.  As a consequence, the definition of a barcode does not extend to multidimensional persistence modules in any completely satisfactory way.  
 
A finitely presented $n$-D persistence module $M$ can be written in an essentially unique way as a direct sum of indecomposables; this follows easily from a standard formulation of the Krull-Schmidt theorem \cite{atiyah1956krull}.  Thus, in principle, we can define the barcode $\B_M$ of $M$ as we do in the 1-D case, as the collection of isomorphism classes of the indecomposables of $M$.  However, it follows easily from Gabriel's theorem in the theory of quiver representations \cite{gabriel1972unzerlegbare, derksen2005quiver} that for $n>1$, the set of isomorphism classes of finitely presented, indecomposable, $n$-D persistence modules is extremely complicated.  Thus, the invariant $\B_M$ will generally be far too complicated to be useful in the ways that the barcode of a 1-D persistence module is typically useful.  In particular, there seems to be no naive way to define a multidimensional generalization of the bottleneck distance in terms of such generalized barcodes $\B_M$.

Naively, one might hope to give a different definition of the barcode of $M$, as a collection of nice subsets of $\R^n$, much as the barcode of a 1-D persistence module is defined as a collection of nice subsets of $\R$.  However, in a sense that can be made precise, there is no good way of formulating such a definition when $n>1$, even if we allow our invariant to be incomplete. 
 
\subsubsection{Distances on Multidimensional Persistence Modules}
The main goal of this paper is to show that, in spite of the unavailability of a fully satisfactory definition of the barcode for multi-D persistence modules, the bottleneck distance $d_B$ does admit a simple and very well-behaved generalization to the multidimensional setting.  This generalization, the \emph{interleaving distance} $d_I$, is defined directly on persistence modules, in a ``barcode-free'' way.

The question of how to best generalize $d_B$ to the setting of multi-D setting is one of basic importance to the theory of multidimensional persistent homology: In order to adapt to the multidimensional setting the many theoretical results for 1-D persistence which are formulated using $d_B$, we require a good multidimensional generalization of $d_B$ \cite{lesnick2012multidimensional}.

A number of papers have introduced pseudometrics on multidimensional persistence modules \cite{ishkhanov2008topological,carlsson2010multiparameter,cerri2009multidimensional,frosini2010stable}, and several of these have presented stability results for the metrics they introduce \cite{carlsson2010multiparameter,cerri2009multidimensional,frosini2010stable}.  The multi-dimensional matching distance of \cite{cerri2009multidimensional} is unique amongst the pseudometrics introduced by these papers in that it is a generalization of $d_B$.

However, in choosing a multidimensional generalization of $d_B$ for use in the development of theory or in applications, we want more of our distance than just good stability properties.  A stability result of the kind typically appearing in the persistent homology literature \cite{cohen2007stability,chazal2009proximity,cohen2010lipschitz,cerri2009multidimensional} tells us that our distance on persistence modules is not, in some relative sense, too sensitive.  On the other hand, a good choice of distance should also not be too insensitive.  As an extreme illustration of this, consider the pseudometric on persistence modules which is identically 0; it satisfies lots of strong stability properties, yet is clearly too insensitive to be of any use.

Ideally, then, we would like to have a generalization of $d_B$ to multidimensional persistence modules which is not only stable, but also is as sensitive as a stable metric can be, in a suitable sense.  Our universality result shows that $d_I$ satisfies a property of this kind, and that it is the unique distance which does so.  The result thus distinguishes $d_I$ from the many possible choices of stable distances on multidimensional persistence modules as a particularly natural choice for use in the development of theory.  

\subsection{Overview of Results}
We now give an overview of our main results, expanding on the abstract.  Where convenient, we refer to multidimensional persistence modules simply as ``persistence modules."

\subsubsection{The Isometry Theorem}

Our first main result, Theorem~\ref{InterleavingEqualsBottleneck}, shows that on 1-D persistence modules whose vector spaces are finite dimensional, $d_I=d_B$.  In view of the algebraic stability theorem of \cite{chazal2009proximity}, which says that $d_I\geq d_B$, it is enough for us to show that $d_I\leq d_B$.  Adopting the terminology of \cite{bubenik2012categorification}, we call the result that $d_I=d_B$ the {\it isometry theorem}.  

Our proof of the isometry theorem relies on a recent version of the structure theorem for 1-D persistence modules, due to Crawley-Boevey \cite{crawley2012decomposition}.  An earlier version of the present paper \cite{lesnick2011optimalityV2}, written before \cite{crawley2012decomposition} was available,  proved the isometry theorem using a weaker version of the structure theorem, due to Webb \cite{webb1985decomposition}.  The stronger structure theorem of \cite{crawley2012decomposition} allows for major simplifications both in the definition of $d_B$ for 1-D persistence modules indexed by $\R$ and in our proof that $d_I=d_B$.  In fact, given the algebraic stability theorem of \cite{chazal2009proximity} and the structure theorem of \cite{crawley2012decomposition}, the proof of the isometry theorem becomes almost trivial.

\nparagraph{Other Proofs of the Isometry Theorem} Several months after I posted the first version of this paper to the arXiv in 2011, two papers \cite{bubenik2012categorification,chazal2012structure} were posted to the arXiv, each of which also proves a version of the isometry theorem.   The version of the result presented in \cite{bubenik2012categorification} is a special case of our result.  The version of the result presented in \cite{chazal2012structure} is a variant of our result which applies to a more general class of persistence modules called q-tame persistence modules.  See also \cite{bauer2014induced} for a recent proof of the isometry theorem in the q-tame setting which avoids use of some of the technical machinery of \cite{chazal2012structure}.

As explained in \cite{chazal2012structure}, the structure theorem of \cite{crawley2012decomposition} does not fully extend to q-tame persistence modules.  Given this, the definition of the barcode in the q-tame setting requires some care; see \cite{chazal2014observable} for a recent investigation of this matter.  
 
\subsubsection{The Characterization of $\epsilon$-Interleaving Relation}

Our second main result is Theorem~\ref{AlgebraicRealization}, a characterization of $\epsilon$-interleaving relation on multidimensional persistence modules; it expresses transparently the sense in which two $\epsilon$-interleaved persistence modules are algebraically similar.  The result tells us that two persistence modules are $\epsilon$-interleaved if and only if there exist presentations for the two modules that are similar, in the sense that they differ from one another by small shifts in the grades of the generators and relations.  The result in turn yields a characterization of $d_I$.  

Our characterization of $\epsilon$-interleaved pairs of modules in fact holds, with essentially the same proof, for more general types of interleavings between multidimensional persistence modules; see \cite{lesnick2012multidimensional} for the more general result and an application of it to topological inference using Vietoris-Rips complexes.

\subsubsection{The Universality of $d_I$}
Our third main result, Corollary~\ref{CorOptimality}, is our universality result for $d_I$.  It tells us that for multidimensional persistence modules over a prime field (i.e., $\Q$ or $\Z/p\Z$ for $p$ prime), 
\begin{enumerate}[(i)]
\item $d_I$ is stable in a sense analogous to that in which the bottleneck distance is shown to be stable in \cite{cohen2007stability,chazal2009proximity},
\item if $d$ is another pseudometric on multidimensional persistence modules, then $d\leq d_I$.
\end{enumerate}
We can interpret this result as the statement that $d_I$ is the terminal object in a poset category of stable metrics on multidimensional persistence modules.

This universality result is novel even for 1-D persistence.  In that case, it offers some mathematical justification, complementary to that of \cite{cohen2007stability,chazal2009proximity}, for the use of the bottleneck distance.  

In fact, provided we restrict attention to 1-D persistence modules whose vector spaces are finite dimensional, we may dispense with the assumption that our field of coefficients is prime; our Theorem~\ref{PFOptimality} gives an analogue of Corollary~\ref{CorOptimality} for this class of modules, over arbitrary fields.

The main ingredient in the proof of Corollary~\ref{CorOptimality} is our characterization result, Theorem~\ref{AlgebraicRealization}.  Using that result, we present a constructive argument which shows that when the underlying field is prime, $\epsilon$-interleavings can, in a suitable sense, be lifted to a category of $\R^n$-valued functions. Given this, our universality result follows readily.
  
After I posted the first version of this paper to the arXiv, I learned that for the special case of $0^{\th}$ 1-D persistent homology, d'Amico et al.  \cite{d2010natural} had previously established a universality result for $d_B$ similar to the universality results given here.  The main universality result of that work, \cite[Theorem 32]{d2010natural}, is a special case of our Theorem~\ref{PFOptimality}.  
  
\subsubsection{The Closure Theorem}

Our fourth main result, Theorem~\ref{InterleavingThm}, says that for two finitely presented multidimensional persistence modules $M$ and $N$, if $d_I(M,N)=\epsilon$, then $M$ and $N$ are $\epsilon$-interleaved.  This is equivalent to the statement that the set 
\[\{\epsilon\mid M\textup{ and  }N\textup{ are }\epsilon\textup{-interleaved}\}\] 
is closed in $\R$.  We thus call this result the {\it closure theorem.}  Considering the case $\epsilon=0$, it follows that $d_I$ restricts to a metric on isomorphism classes of finitely presented multidimensional persistence modules.  

\subsection{Computation of the Interleaving Distance}
While it is known that the bottleneck distance on 1-D persistence modules can be computed readily \cite{cohen2007stability}, the question of if and how the interleaving distance on $n$-D persistence modules can be efficiently computed remains open for $n>1$.  

An earlier preprint of this paper and my Ph.D. thesis \cite{lesnick2012multidimensional} presented partial results on computation of the interleaving distance, which for brevity's sake are omitted here: It was shown that given $\epsilon\geq 0$ and presentations of $n$-D persistence modules $M$ and $N$ with a total of $m$ generators and relations, deciding whether $M$ and $N$ are $\epsilon$-interleaved is equivalent to deciding whether a solution exists to a certain system of multivariate quadratics with $O(m^2)$ equations and $O(m^2)$ variables.  Further, it was shown that $d_I(M,N)$ lies in a set $S\subset [0,\infty)$ of size $O(m^2)$, determined in a simple way from the presentations of $M$ and $N$.  Hence, performing a binary search over $S$, we can in principle compute $d_I(M,N)$ by deciding whether $M$ and $N$ are $\epsilon$-interleaved for $O(\log m)$ values of $\epsilon$.

However, the general problem of deciding whether a solution exists to a system of quadratics is NP-complete.  We do not yet understand the complexity of deciding whether solutions exist to the specific systems of quadratics that arise in our setting.

\subsection{Other Work on Generalized Interleavings}
Interleavings and interleaving distances can also be defined on multidimensional filtrations.  My Ph.D. thesis \cite{lesnick2012multidimensional} presents results on interleavings between multidimensional filtrations, some of which parallel the results  presented here.  Building on that work, Andrew Blumberg and I have shown that an analogue of the main universality result of the present paper holds for a homotopy theoretic variant of the interleaving distance on multidimensional filtrations \cite{blumberg2014universality}.  

In addition, \cite{lesnick2012multidimensional} studies applications of multidimensional interleavings and interleaving distances to topological inference.  In particular \cite[Chapter 4]{lesnick2012multidimensional}, presents multidimensional analogues of a topological inference theorem of Chazal, Guibas, Oudot, and Skraba \cite{chazal2009persistence}, formulated directly on the level of filtrations.  The inference theorem of \cite{chazal2009persistence}, which we can think of as a loose analogue of the weak law of large numbers for persistent homology \cite[Section 1.2.2]{lesnick2012multidimensional}, adapts readily to the multidimensional setting, given the language of interleavings.

Since the first version of this paper was posted to the arXiv, several other authors have studied interleavings and interleaving distances in various generalized persistence settings: \cite{curry2013sheaves} considers interleavings on (co)presheaves; \cite{morozov2013interleaving} and \cite{de2014categorification} study interleaving distances on merge trees and Reeb graphs; and \cite{bubenik2013metrics} introduces and studies a general definition of the interleaving distance on diagrams indexed by an arbitrary preordered metric space.  

It would be interesting to know whether the universality result of this paper adapts to these other settings.  

\subsection{Organization of the Paper}
The paper is organized as follows.  Section~\ref{SectionAlgebraicPreliminaries} covers preliminaries that will be needed for the rest of the paper.  In particular, we give the module-theoretic definition of a multidimensional persistence module and define the interleaving distance.

Each of Sections~\ref{Sec:IsometrySection}-\ref{Sec:ClosureTheorem} centers on one of our main results.  These sections 
can largely be read independently of each other, with two exceptions: The proof of our main universality result in Section~\ref{Sec:OptimalitySection} depends on our characterization of the $\epsilon$-interleaving relation in Section~\ref{Sec:Characterization}, and our treatment of the  closure theorem in \ref{Sec:ClosureTheorem} uses presentations of multi-D persistence modules, discussed in \cref{Sec:Presentations}.

The paper concludes in Section~\ref{DiscussionSection} with a discussion of open problems and future directions for research.
  \section{Preliminaries}\label{SectionAlgebraicPreliminaries}
\subsection{Multidimensional Persistence Modules}\label{Sec:PersistenceModules}
In \cref{Sec:Background}, we defined an $n$-dimensional persistence module as a diagram of vector spaces indexed by $\RCat^n$.  In fact, as we now explain, we can interpret this object as a module in the usual algebraic sense.  This interpretation will be convenient in Sections~\ref{Sec:Characterization}-\ref{Sec:ClosureTheorem}, when we work with presentations of multidimensional persistence modules. 

\subsubsection{The Ring $P_n$}\label{MonoidRings}
For $k$ a field, let the ring $P_n$ be the analogue of the usual polynomial ring $k[x_1,\ldots,x_n]$ in $n$ variables, where exponents of the indeterminates in $P_n$ are allowed to take on arbitrary values in $[0,\infty)$ rather than only values in the non-negative integers.  For example, if $k=\Q$ then 
\[1+x_2+x_1^\pi+\frac{2}{5}x_1^3x_2^{\sqrt 2}\in P_2.\]
Formally, $P_n$ can be defined as a {\it monoid ring} over the monoid $([0,\infty)^n,+)$ \cite{lang2002algebra}.

For $a=(a_1,\ldots,a_n)\in [0,\infty)^n$, we let $x^a$ denote the monomial $x_1^{a_1}x_2^{a_2}\cdots x_n^{a_n}\in P_n$. 

\subsubsection{Module-Theoretic Description of a Multidimensional Persistence Module}
We define an \emph{$n$-graded module} (or \emph{n-module}, for short) to be a $P_n$-module $M$ with a direct sum decomposition as a $k$-vector space $M \cong \bigoplus_{a \in {\R}^n} M_a$, such that \[x^b(M_a) \subset M_{a+b}\] for all $a\in \R^n,$ $b\in [0,\infty)^n$.  

For $a\leq b\in \R^n$, the action of $x^{b-a}$ on $M$ defines a linear map $M_a\to M_b$, which we denote by $\varphi_M(a,b)$ and call a \emph{transition map}.  Note that for $a\leq b\leq c\in\R^n$, the following diagram commutes:
\[
\xymatrixcolsep{3.5pc}
\xymatrix{
& M_c\\
M_a\ar[r]_{\varphi_M(a,b)}\ar[ur]^{\varphi_M(a,c)} &M_b\ar[u]_{\varphi_M(b,c)}
}
\]

We define a category $\nmod$ of $n$-modules by taking the morphisms in $\nmod$ to be the module homomorphisms $f:M\to N$ such that $f(M_a)\subset N_a$ for all $a \in \R^n$.  We let $f_a:M_a\to N_a$ denote the restriction of the morphism $f$.  Note that for all $a\leq b\in \R^n$, the following diagram commutes:
\[
\xymatrixcolsep{3.5pc}
\xymatrix{
M_a\ar[r]^{\varphi_M(a,b)}\ar[d]_{f_a} &M_b\ar[d]^{f_b}\\
N_a\ar[r]_{\varphi_N(a,b)} &N_b
}
\]

\begin{remark}\label{Rmk:CategoryDef}
Let $\Vect^{\RCat^n}$ denote the category whose objects are $n$-D persistence modules (functors $\RCat^n\to \Vect$), as defined in \cref{Sec:Background}, and whose morphisms are natural transformations.  
There is an obvious isomorphism between $\nmod$ and $\Vect^{\RCat^n}$.  Thus we may identify the two categories, and work interchangeably with $n$-dimensional persistence modules and $n$-modules. 
\end{remark}

\subsubsection{Interval $n$-Modules}\label{Sec:IntervalModules}
We say $I\subset \R^n$ is a {\it (generalized) interval} if $I$ is non-empty and $a,c\in I$ implies $b\in I$ whenever $a\leq b\leq c$.  
For $I \subset \R^n$ an interval, define the $n$-module $C(I)$ by
\begin{equation*}
C(I)_a=
\begin{cases}
k &{\textup{if }} a\in I, \\
0 &{\textup{ otherwise}.}
\end{cases}
\end{equation*}
\vskip14pt
\begin{equation*}
\varphi_{C(I)}(a,b)=
\begin{cases}
\id_k &{\textup{if }} a,b\in I,\\
0 &{\textup{ otherwise}.}
\end{cases}
\end{equation*}
We refer to the module $C(I)$ as an {\it interval $n$-module.}

\subsubsection{Homogeneity}
Let $M$ be an $n$-module.  For $a\in \R^n$, we refer to any non-zero $v\in M_a$ as a {\it homogeneous element} of grade $a$. 
 A {\it homogeneous submodule} of an $n$-module is a submodule generated by a set of homogeneous elements.  The quotient of an $n$-module $M$ by a homogeneous submodule of $M$ is itself an $n$-module; the $n$-graded structure on the quotient is induced by that of $M$.

\subsection{$\epsilon$-Interleavings and the Interleaving Distance}\label{InterleavingsOfModules}
For $\epsilon\in \R$, let ${\vec \epsilon}\in \R^n$ denote the vector whose components are each $\epsilon$.  

\subsubsection{Shift Functors}\label{ShiftsOfModules} For $v\in \R^n$, we define the \emph{$v$-shift functor} \[(\cdot)(v):\nmod\to \nmod\] as follows: For $M$ an $n$-module, define the $n$-module $M(v)$ by taking $M(v)_a=M_{a+v}$ and $\varphi_{M(v)}(a,b)=\varphi_M(a+v,b+v)$ for $a\leq b\in \R^n$.
For $f$ a morphism in $\nmod$, define $f(v)$ by taking $f(v)_a=f_{a+v}$.

To keep notation simple, for $\epsilon\in \R$, we write $(\cdot)(\vec \epsilon):\nmod\to \nmod$ simply as $(\cdot)(\epsilon)$.

\subsubsection{Transition Morphisms}
For an $n$-module $M$ and $\epsilon\in [0,\infty)$, let \[\varphi_M^\epsilon:M\to M(\epsilon),\] the \emph{(diagonal) $\epsilon$-transition morphism}, be the morphism whose restriction to $M_a$ is the linear map $\varphi_M(a,a+\vec\epsilon)$ for all $a\in \R^n$.  

\subsubsection{$\epsilon$-Interleavings}
For $\epsilon\geq 0$, we say that two $n$-modules $M$ and $N$ are \emph{$\epsilon$-interleaved} if there exist morphisms $f:M\to N(\epsilon)$ and $g:N\to M(\epsilon)$ such that 
\begin{align*}
g(\epsilon) \circ f&=\varphi_M^{2\epsilon}\textup{ and }\\
f(\epsilon) \circ g&=\varphi_N^{2\epsilon}; 
\end{align*}
we call $f$ and $g$ \emph{$\epsilon$-interleaving} morphisms.

The definition of $\epsilon$-interleaving morphisms was introduced for $1$-modules in \cite{chazal2009proximity}.  See also \cite{bubenik2012categorification} for a rephrasing of the definition using the language of natural transformations; this is given for $1$-modules but extends immediately to $n$-modules.

\begin{remark}\label{EasyInterleavingRemark}
It's easy to show that if $0\leq\epsilon_1 \leq \epsilon_2$ and $M$ and $N$ are $\epsilon_1$-interleaved, then $M$ and $N$ are $\epsilon_2$-interleaved. 
\end{remark}

\subsubsection{Metrics and Pseudometrics}
Recall that an {\it extended pseudometric} on $X$ is a function $d:X\times X\to [0,\infty]$ with the following three properties:
\begin{enumerate*}
\item $d(x,x)=0$ for all $x\in X$.
\item $d(x,y)=d(y,x)$ for all $x,y\in X$.
\item $d(x,z)\leq d(x,y)+d(y,z)$ for all $x,y,z\in X$ with $d(x,y),d(y,z)< \infty$.
\end{enumerate*}
In this paper, by a {\it distance}, we will mean an extended pseudometric.

An {\it extended metric} is an extended pseudometric $d$ with the additional property that $d(x,y)\ne 0$ whenever $x\ne y$.  In what follows, we'll drop the modifier ``extended," and refer to extended (pseudo)metrics simply as (pseudo)metrics.

\subsubsection{The Interleaving Distance on $n$-modules}
For a category $C$, let $\obj C$ denote the objects of $C$ and let $[\obj C]$ denote the collection of isomorphism classes of objects of $C$.  For $M\in \obj C$, let $[M]$ denote the isomorphism class of $M$.  For $d$ a pseudometric on $[\obj C]$ and $M,N\in \obj C$, we write $d(M,N)$ as shorthand for $d([M],[N])$.

We define $d_I:[\obj\nmod]\times [\obj\nmod]\to [0,\infty]$, the \emph{interleaving distance}, by taking 
\[d_I(M,N)=\inf\, \{\epsilon\in [0,\infty)\mid M\textup{ and  }N\textup{ are }\epsilon\textup{-interleaved}\}.\]

Note that $d_I$ is a pseudometric.  However, the following example shows that $d_I$ is not a metric.  

\begin{example}\label{NonIsoSameDiagram}  Let $M$ be the $1$-module with $M_0=k$ and $M_{a}=0$ if $a\ne 0$.  Let $N$ be the trivial $1$-module.  Then $M$ and $N$ are not isomorphic, and so are not $0$-interleaved, but it is easy to check that $M$ and $N$ are $\epsilon$-interleaved for any $\epsilon>0$.  Thus $d_I(M,N)=0$. 
\end{example}

To offer some intuition for how $d_I$ behaves, we consider an additional example.

\begin{example}
For $I$ an interval, as defined in Section~\ref{Sec:IntervalModules}, define $w(I)$, the {\it width} of the interval $I$, by \[w(I)=\sup\, \{\epsilon\in [0,\infty)\mid \exists\, a\in I\textup{ such that }a+\vec\epsilon\in I\}.\]  If $N$ is the trivial $n$-module, it is easily checked that $d_I(C(I),N)=\frac{w(I)}{2}$.  For example, if $n=2$ and \[I_1=\{(s,t)\in \R^2\mid(0,0)\leq (s,t)<(2,2)\}\] then $d_I(C(I_1),N)=1$; similarly, if \[I_2=\{(x,y)\in\R^2\mid x\in [0,2)\textup{ or }y\in[0,2)\}\] then $d_I(C(I_2),N)=1$.
\end{example}

  \section{The Isometry Theorem}\label{Sec:IsometrySection}
In this section we present the isometry theorem, our first main result.

\subsection{Preliminaries for 1-D Persistence Modules}\label{1DPreliminaries}
\subsubsection{Basics} 
Informally, a \emph{multiset} is a set where an element can appear multiple times.  For our purposes it will be sufficient to restrict attention to multisets where each element can appear at most countably many times.  Formally, then, we may define a multiset $\A$ with underlying set $A$ to be a subset of $A\times \NN$ such that if $(s,n)\in \A$ and $n'\leq n\in \NN$ then $(s,n')\in \A$.

Let ${\mathcal I}$ denote the set of all (non-empty) intervals in $\R$.  We define a \emph{barcode} to be a multiset of intervals in $\R$, i.e., a multiset whose underlying set is a subset of $\I$.    

We say an $n$-module $M$ is \emph{pointwise finite dimensional}, or simply \emph{\pfd}, if $\dim(M_a) < \infty$ for all $a\in \R^n$.  

\subsubsection{Structure Theorem For Pointwise Finite Dimensional $1$-Modules}\label{Sec:WCBStructureTheorem}
The structure theorem for finitely generated $\Z$-indexed persistence modules \cite{zomorodian2005computing} is well known.  The recent paper of William Crawley-Boevey \cite{crawley2012decomposition} provides the following generalization:

\begin{thm}[Structure of Persistence Modules \cite{crawley2012decomposition}]\label{Thm:WCBStructureTheorem}
For any \pfd $1$-module $M$, there exists a unique barcode $\B_M$ such that \[M\cong \oplus_{I\in \B_M} C(I).\]
\end{thm}
As mentioned in \cref{Sec:Background}, we call $\B_M$ the \emph{barcode of $M$}.

\subsubsection{$\epsilon$-Matchings and the Bottleneck Distance}
To state the isometry theorem, we first need to define the bottleneck distance on \pfd $1$-modules.

For $I\subset \R$ an interval and $\epsilon\geq 0$, let the interval $\ex^\epsilon(I)$ be given by \[\ex^\epsilon(I)=\{t\in \R\mid \exists\, s\in I\textup{ with }|s-t|\leq \epsilon\}.\]  For $\D$ a barcode and $\epsilon\geq 0$, define $\D_{\epsilon}\subset \D$ to be the multiset of intervals in $\D$ which contain a subinterval of the form $[t,t+\epsilon]$ for some $t\in \R$.  Note that $\D_{0}=\D$.

Define an \emph{$\epsilon$-matching} between barcodes $\mathcal C$ and $\mathcal D$ to be a bijection $\sigma:\C'\leftrightarrow \D'$ for 
some $\C'\subset \C$, $\D'\subset \D$, satisfying the following properties:
\begin{enumerate*}
\item $\C_{2\epsilon} \subset \C'$,
\item $\D_{2\epsilon}\subset \D'$,
\item if 
$\sigma(I)=J$ then
$I\subset \ex^\epsilon(J)$ and 
$J\subset \ex^\epsilon(I)$.
\end{enumerate*}

For barcodes $\C$ and $\D$, we define the bottleneck distance $d_B$ by
\[d_B(\mathcal C,\mathcal D)=\inf\, \{\epsilon\in [0,\infty) \mid \exists\textup{ an }\epsilon\textup{-matching between }\mathcal C\textup{ and }\mathcal D\}.\]
$d_B$ induces a pseudometric on \pfd $1$-modules, also denoted $d_B$, given by
$d_B(M,N)=d_B(\B_M,\B_N)$.

\begin{remark}\label{delta_matching_remark}
Our definition of an $\epsilon$-matching is slightly stronger than the one appearing in \cite[Section 4.2]{chazal2012structure}, which is insensitive to whether intervals are closed or open on the left and right.  Using the stronger definition of $\epsilon$-matching allows us to state a sharp form of the isometry theorem for \pfd persistence modules.  However, regardless of which definition of $\epsilon$-matching one uses, the definition of the bottleneck distance one obtains is the same.   
\end{remark}

\subsubsection{The Algebraic Stability Theorem}
The \emph{algebraic stability theorem}, introduced in \cite{chazal2009proximity} and revisited in \cite{chazal2012structure}, considerably generalizes the earlier stability result for $\R$-valued functions of \cite{cohen2007stability}.  In its sharp formulation for \pfd 1-modules \cite{bauer2014induced}, the statement of the theorem is as follows:

\begin{thm}[Algebraic Stability Theorem]\label{AlgebraicStability}  For $M$ and $N$ \pfd $1$-modules, an $\epsilon$-interleaving morphism $f:M\to N(\epsilon)$ induces an $\epsilon$-matching between $\B_M$ and $\B_N$.  In particular, \[d_B(M,N)\leq d_I(M,N).\]\end{thm}

\subsection{The Isometry Theorem}\label{Thm:Isometry}
We now come to the isometry theorem, which tells us that the converse of the algebraic stability theorem also holds.  Our exposition closely follows \cite[Appendix B]{bauer2014induced}, which is in turn adapted from an earlier version of this paper.
\begin{thm}[The Isometry Theorem]\label{InterleavingEqualsBottleneck} For any $\epsilon\geq 0$, \pfd $1$-modules $M$ and $N$ are $\epsilon$-interleaved if and only if there exists an $\epsilon$-matching between $\B_M$ and $\B_N$.  In particular, \[d_I(M,N)=d_B(M,N)\] \end{thm}

Our proof of \cref{InterleavingEqualsBottleneck} relies on the following easy lemma, whose proof we leave to the reader:

\begin{lem}\label{Lem:CyclicConverse} 
Let $\epsilon\geq 0$. 
\begin{itemize}
\item[(i)] If $I$, $J$ are intervals such that $I\subset \ex^\epsilon(J)$ and $J\subset\ex^\epsilon(I)$,
then $C(I)$ and $C(J)$ are $\epsilon$-interleaved.
\item[(ii)] If $I$ is an interval which does not contain the subinterval $[t,t+2\epsilon]$ for any $t\in \R$, then $C(I)$ and the trivial module are $\epsilon$-interleaved.
\end{itemize}
\end{lem}

\begin{proof}[Proof of \cref{InterleavingEqualsBottleneck}]
In view of the algebraic stability theorem, it suffices to show that an $\epsilon$-matching between $\B_M$ and $\B_N$ induces an $\epsilon$-interleaving between $M$ and $N$.

We may assume without loss of generality that 
\[M=\bigoplus_{I\in \B_M} C(I),\qquad N=\bigoplus_{I\in \B_N} C(I). \]
For $\D_M\subset \B_M$ and $\D_N\subset \B_N$, let $\sigma:\D_M\to \D_N$ be an $\epsilon$-matching between $\B_M$ and $\B_N$.  Let 
\[
\begin{aligned}
M_\bullet&=\bigoplus_{I\in \D_M} C(I),\\
M_\circ&=\bigoplus_{I\in \D_M^c} C(I), 
\end{aligned}
\qquad
\begin{aligned}
N_\bullet&=\bigoplus_{I\in \D_N} C(I),\\
N_\circ&=\bigoplus_{I\in \D_N^c} C(I),
\end{aligned}
\]
where $\D_M^c$ and $\D_N^c$ denote the complements of $\D_M$ and $\D_N$ in $\B_M$ and $\B_N$, respectively.

Clearly, $M=M_\bullet\oplus M_\circ$ and $N=N_\bullet\oplus N_\circ$.  By \cref{Lem:CyclicConverse}\,(i), for each pair $(I,J)\in \D_M\times  \D_N$ with $\sigma(I)=J$, we may choose a pair of $\epsilon$-interleaving morphisms 
\[f_I:C(I)\to C(J)(\epsilon),\qquad g_J:C(J)\to C(I)(\epsilon).\]
These morphisms induce a pair of $\epsilon$-interleaving morphisms
\[f_\bullet:M_\bullet\to N_\bullet(\epsilon),\qquad g_\bullet:N_\bullet\to M_\bullet(\epsilon).\]

Define a morphism $f:M\to N$ by taking the restriction of $f$ to $M_\bullet$ to be equal to $f_\bullet$ and taking the restriction of $f$ to $M_\circ$ to be the trivial morphism.  Symmetrically, define a morphism $g:N\to M$ by taking the restriction of $g$ to $N_\bullet$ to be equal to $g_\bullet$ and taking the restriction of $g$ to $N_\circ$ to be the trivial morphism.  By \cref{Lem:CyclicConverse}\,(ii), \[\translation M {2\epsilon} (M_\circ)=\translation N {2\epsilon} (N_\circ)=0.\]  From this fact and the fact that $f_\bullet$ and $g_\bullet$ are $\epsilon$-interleaving morphisms, it follows that $f$ and $g$ are $\epsilon$-interleaving morphisms as well.
\end{proof}

  \section{Characterization of the $\epsilon$-Interleaving Relation}\label{Sec:Characterization}
We now present our characterization of the $\epsilon$-interleaving relation on $n$-modules; as noted in the introduction, this result expresses transparently the sense in which two $\epsilon$-interleaved $n$-modules are algebraically similar.  This characterization induces in an obvious way a corresponding characterization of $d_I$.  It is also the most important step in our proof of our main universality result Corollary~\ref{CorOptimality}.  
As noted earlier, our characterization of the $\epsilon$-interleaving relation holds, with essentially the same proof, for more general types of interleavings between $n$-modules \cite{lesnick2012multidimensional}.

The intuitive idea of our characterization theorem is simple: informally, the theorem tells us that two $n$-modules $M$ and $N$ are $\epsilon$-interleaved if and only if there exist presentations for $M$ and $N$ which differ from one another by $\vec \epsilon$-shifts of the grades of generators and relations.

\subsection{Free $n$-Modules and Presentations}\label{Sec:Presentations}
We begin our account of the characterization theorem by introducing free $n$-modules and presentations of $n$-modules.

\subsubsection{$n$-Graded Sets}
Define an {\it $n$-graded set} to be a pair \[\W=(W,\gr_\W)\] for some set $W$ and function $\gr_\W:W\to\R^n$.  We'll often abuse notation slightly and write $\W$ to mean the set $W$.  Also, when $\W$ is clear from context we'll write $\gr_\W$ simply as $\gr$.  Formally, we may regard $\W$ as the set of pairs \[\{(w,\gr(w))\mid w\in W\}.\]  We'll sometimes make use of this representation.   

For $\epsilon\geq 0$ let $\W(\epsilon)$ be the $n$-graded set $(W,\gr')$, where $\gr'(w)=\gr_\W(w)-\vec\epsilon$.  The disjoint union $\W_1 \amalg \W_2$ of $n$-graded sets $\W_1$ $\W_2$ is defined in the obvious way.  

Clearly, we may regard the set of homogeneous elements of an $n$-module $M$ as an $n$-graded set, so $\gr(y)$ is well defined for $y\in M$ homogeneous.

\subsubsection{Free $n$-Modules}
As we now explain, the usual notion of a free module extends to the setting of $n$-modules.

We regard the ring $P_n$ of \cref{MonoidRings} as an $n$-module by taking $(P_n)_a$ to be the 1-D vector space spanned by the monomial $x^a$, for each $a\in \R^n$.  Then, in our notion for shifts of persistence modules of \cref{ShiftsOfModules}, for $v\in \R^n$,  $P_n(-v)$ denotes a copy of the ring $P_n$, shifted so that the multiplicative identity of the ring is homogeneous of grade $v$.

For $\W$ an $n$-graded set, let $\free[\W]=\oplus_{w\in {\W}} P_n(-\gr(w))$.  We identify $\W$ with a homogeneous set of generators for $\free[\W]$ by identifying $w\in \W$ with the multiplicative identity of the corresponding summand $P_n(-\gr(w))$.  

A {\it free $n$-module} $F$ is an $n$-module such that $F\cong \free[\W]$ for some $n$-graded set $\W$.  Equivalently, we can define a free $n$-module as an $n$-module which satisfies a certain universal property; see \cite{carlsson2009theory} for the definition in the case of free multi-graded $k[x_1,\ldots,x_n]$-modules.  The definition in our case is analogous.

For $\Y$ a homogeneous subset of a free $n$-module $F$, let $\langle \Y \rangle$ denote the submodule of $F$ generated by $\Y$.

\subsubsection{Presentations of $n$-Modules}
A {\it presentation} of an $n$-module $M$ is a pair $(\W,\Y)$ where $\W$ is an $n$-graded set and $\Y \subset \free[\W]$ is a set of homogeneous elements such that $M\cong \free[\W]/\langle \Y \rangle$.  We denote the presentation $(\W,\Y)$ as $\langle \W|\Y \rangle$, and write $M\cong \langle \W|\Y \rangle$.

For $n$-graded sets $\W_1,\W_2$ and homogeneous sets $\Y_1,\Y_2\subset \free[\W_1 \amalg \W_2]$, we'll let $\langle \W_1,\W_2|\Y_1,\Y_2 \rangle$ denote $\langle \W_1 \amalg \W_2|\Y_1 \cup \Y_2 \rangle$.

Clearly, a presentation exists for any $n$-module.  If $M$ is an $n$-module such that there exists a presentation $\langle \W|\Y \rangle\cong M$ with $\W$ and $\Y$ finite, then we say that $M$ is {\it finitely presented}.

\subsubsection{Free Covers and Lifts}
Define a {\it free cover} of an $n$-module $M$ to be a surjective morphism $\rho_M:F_M \to M$, for $F_M$ a free $n$-module.  For $\rho_M:F_M\to M$, $\rho_N:F_N\to N$ free covers and $f:M \to N$ a morphism, define a {\it lift} of $f$ to be a morphism ${\tilde f}:F_M \to F_N$ such that the following diagram commutes:
 \[
\xymatrixcolsep{3.5pc}
\xymatrix{
F_M\ar[r]^{\tilde f}\ar[d]_{\rho_M} &F_N\ar[d]^{\rho_N}\\
M\ar[r]_{f} &N
}
\]

\begin{lem}[Existence and Uniqueness (up to Homotopy) of Lifts]\label{ExistenceAndHomotopyUniquenessOfLifts} For any morphism $f:M \to N$ of $n$-modules and free covers $\rho_M:F_M\to M$, $\rho_N:F_N\to N$, there exists a lift ${\tilde f}:F_M\to F_N$ of $f$.  If ${\tilde f'}:F_M\to F_N$ is another lift of $f$, then $\im({\tilde f}-{\tilde f'})\subset \ker \rho_N$.  \end{lem}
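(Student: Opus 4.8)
The plan is to prove the two assertions of Lemma~\ref{ExistenceAndHomotopyUniquenessOfLifts} by exploiting the universal (projective) property of free $n$-modules, exactly as one proves existence and uniqueness up to homotopy of lifts in ordinary homological algebra over a graded ring. The key observation is that a free $n$-module $\langle G\rangle = \oplus_{y\in G} P_n(-gr(y))$ is projective in $\nmod$: a morphism out of $\langle G\rangle$ is determined freely by the images of the basis elements $y\in G$, subject only to the constraint that the image of $y$ must be a homogeneous element of grade $gr(y)$.

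For existence, I would first fix a basis $B$ of $F_M$, which we may identify with homogeneous generators $\{y\}$ of $F_M$ with $gr(y)$ well-defined. For each such $y$, consider $f(\rho_M(y))\in N$; this is a homogeneous element of $N$ of grade $gr(y)$. Since $\rho_N:F_N\to N$ is surjective and grade-preserving, the homogeneous summand $(F_N)_{gr(y)}$ surjects onto $N_{gr(y)}$, so I can choose a preimage $z_y\in (F_N)_{gr(y)}$ with $\rho_N(z_y)=f(\rho_M(y))$. By the universal property of the free module $F_M$, the assignment $y\mapsto z_y$ extends uniquely to a morphism $\tilde f:F_M\to F_N$ of $n$-modules, and by construction $\rho_N\circ\tilde f$ and $f\circ\rho_M$ agree on the generating set $B$ of $F_M$, hence agree everywhere. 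This gives the desired lift. For uniqueness up to homotopy, suppose $\tilde f,\tilde f'$ are both lifts of $f$. Then $\rho_N\circ(\tilde f-\tilde f') = f\circ\rho_M - f\circ\rho_M = 0$, so for every $x\in F_M$ we have $(\tilde f-\tilde f')(x)\in\ker(\rho_N)$, i.e. $\im(\tilde f-\tilde f')\subset\ker(\rho_N)$, which is precisely the claimed statement.

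The only points requiring a little care are: (i) checking that the universal property of $\langle G\rangle$ really does allow an arbitrary choice of homogeneous image of grade $gr(y)$ for each basis element $y$ — this follows from the analogous fact for free $n$-graded $k[x_1,\dots,x_n]$-modules cited from \cite[Section 4.2]{carlsson2009theory}, applied verbatim to $P_n$; and (ii) the grade-preservation of $\rho_N$ ensures $(F_N)_{gr(y)}\twoheadrightarrow N_{gr(y)}$, so the needed homogeneous preimage $z_y$ exists. Neither of these is a genuine obstacle; the statement is essentially a formal consequence of projectivity of free objects together with surjectivity of the free covers. If I had to identify a ``hard part,'' it would only be the bookkeeping of making sure all the maps are checked to be morphisms in $\nmod$ (i.e. grade-preserving), but this is immediate from the constructions since every choice was made grade by grade.
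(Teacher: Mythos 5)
Your proof is correct and is exactly the standard projectivity argument (lift generators along the surjection $\rho_N$, extend by freeness, and observe that any two lifts differ by a map into $\ker\rho_N$), which is precisely what the paper is invoking when it cites the existence and homotopy-uniqueness of free resolutions from Eisenbud and declares the proof straightforward. Same approach, you've just written out the details the paper leaves implicit.
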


\begin{proof} This is just a specialization of the standard result on the existence and homotopy uniqueness of free resolutions \cite[Theorem A3.13]{eisenbud1995commutative} to the $0^{\th}$ modules in free resolutions for $M$ and $N$.  The proof is straightforward. \end{proof} 

\subsubsection{Characterization Theorem Preliminiares}  
To prepare for the statement of our characterization theorem, we make some basic observations and establish notation concerning shifts of $n$-graded sets and free $n$-modules.

\begin{remark}\label{ShiftInclusionRemark} \mbox{}
\begin{enumerate}[(i)]
\item For any $n$-graded set $\W$ and $\epsilon\in \R$, $\free[\W(\epsilon)]$ is canonically isomorphic to $\free[\W](\epsilon)$.  Thus we may identify $\free[\W(-\epsilon)](\epsilon)$ with $\free[\W](-\epsilon)(\epsilon)=\free[\W]$.
\item For $\epsilon\geq 0$, the morphism \[\varphi_{\free[\W(-\epsilon)]}^\epsilon:\free[\W(-\epsilon)] \to \free[\W]\] is injective, and so gives an identification of $\free[\W(-\epsilon)]$ with a submodule of $\free[\W]$. 
\item More generally, noting that for $n$-graded sets $\W_1$, $\W_2$, 
\[\free[\W_1,\W_2]=\free[\W_1]\oplus \free[\W_2],\] we see that the morphism 
\[
\left(
\begin{matrix}
\id_{\free[\W_1]}   &0 \\
0                             &\varphi_{\free[\W(-\epsilon)]}^\epsilon   
\end{matrix}
\right)
: \free[\W_1]\oplus\free[\W_2(-\epsilon)]\to \free[\W_1]\oplus\free[\W_2]
\]
gives an identification of $\free[\W_1,\W_2(-\epsilon)]$ with a submodule of $\free[\W_1,\W_2]$.  

Symmetrically, we obtain an identification of $\free[\W_1(-\epsilon),\W_2]$ with a submodule of $\free[\W_1,\W_2]$.
\end{enumerate}
\end{remark}

For $M$ an $n$-module, $\Y\subset M$ homogeneous, and $\epsilon\in \R$, let $\Y(\epsilon)\subset M(\epsilon)$ denote the image of $\Y$ under the bijection between $M$ and $M(\epsilon)$ induced by the identification of each summand $M(\epsilon)_a$ with $M_{a+\vec\epsilon}$.

\begin{remark}\label{Rmk:ShiftsAndHomogeneousSubsets}
By \cref{ShiftInclusionRemark}\,(i), for $\epsilon\geq 0$ and $\Y_1\subset \free[\W_1,\W_2(-\epsilon)]$ homogeneous, we may regard $\Y_1(-\epsilon)$ as a subset of $\free[\W_1(-\epsilon),\W_2(-2\epsilon)]$.  By \cref{ShiftInclusionRemark}\,(iii) then, we may identify $\Y_1(-\epsilon)$ with a homogeneous subset of $\free[\W_1(-\epsilon),\W_2]$.

Symmetrically, for $\Y_2\subset \free[\W_1(-\epsilon),\W_2]$ homogeneous, we may identify $\Y_2(-\epsilon)$ with a homogeneous subset of 
$\free[\W_1,\W_2(-\epsilon)]$.
\end{remark}
  
\subsection{The Characterization Theorem}
We now come to the main result of this section, our characterization of $\epsilon$-interleaved pairs of multidimensional persistence modules.  
  
\begin{thm}[Characterization Theorem]\label{AlgebraicRealization} $n$-modules $M$ and $N$ are $\epsilon$-interleaved if and only if there exist $n$-graded sets $\W_1,\W_2$ and homogeneous sets 
$\Y_1\subset \free[\W_1,\W_2(-\epsilon)]$, $\Y_2\subset \free[\W_1(-\epsilon),\W_2]$ such that
\begin{align*}
M\cong\langle\W_1,\W_2(-\epsilon)|\Y_1,\Y_2(-\epsilon)\rangle, \\
N\cong\langle\W_1(-\epsilon),\W_2|\Y_1(-\epsilon),\Y_2\rangle. \end{align*}
If $M$ and $N$ are finitely presented, $\W_1,\W_2,\Y_1,\Y_2$ can be taken to be finite. \end{thm}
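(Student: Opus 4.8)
The plan is to prove both directions by translating between interleaving data and presentation data. For the ``only if'' direction, suppose $f\colon M\to N(\epsilon)$ and $g\colon N\to M(\epsilon)$ are $\epsilon$-interleaving morphisms. First I would fix free covers $(F_M,\rho_M)$ and $(F_N,\rho_N)$ of $M$ and $N$, chosen so that $F_M=\langle \W_1\rangle$ and $F_N=\langle \W_2\rangle$ for $n$-graded sets $\W_1,\W_2$ arising from bases. Using Lemma~\ref{ExistenceAndHomotopyUniquenessOfLifts}, lift $f$ to $\tilde f\colon F_M\to F_N(\epsilon)$ and $g$ to $\tilde g\colon F_N\to F_M(\epsilon)$. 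The key construction is to build a single free $n$-module $\langle \W_1,\W_2(-\epsilon)\rangle$ (identified with a submodule of $\langle\W_1,\W_2\rangle$ as in Remark~\ref{ShiftInclusionRemark}) that maps onto $M$ via the map sending the $\W_1$-summand by $\rho_M$ and sending the $\W_2(-\epsilon)$-summand through $\tilde g$ composed with $\rho_M$ (after suitable shifting, using that $\tilde g$ lands in $F_M(\epsilon)$, so precomposing with the $\epsilon$-shift lands a copy of $\langle \W_2(-\epsilon)\rangle$ in $F_M$). Symmetrically, $\langle\W_1(-\epsilon),\W_2\rangle$ maps onto $N$. One then reads off the relation sets $\Y_1$ and $\Y_2$ as generators of the kernels of these surjections, and the compatibility relations $g(\epsilon)\circ f=\varphi_M(2\epsilon)$ and $f(\epsilon)\circ g=\varphi_N(2\epsilon)$ are precisely what forces the kernels of the two presentations to be related by the $\vec\epsilon$-shift, i.e.\ that the relations for $M$ are $\Y_1\cup\Y_2(-\epsilon)$ while those for $N$ are $\Y_1(-\epsilon)\cup\Y_2$.

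For the ``if'' direction, suppose we are handed $\W_1,\W_2,\Y_1,\Y_2$ with
\[
M\cong\langle\W_1,\W_2(-\epsilon)\mid\Y_1,\Y_2(-\epsilon)\rangle,\qquad
N\cong\langle\W_1(-\epsilon),\W_2\mid\Y_1(-\epsilon),\Y_2\rangle.
\]
I would construct $f\colon M\to N(\epsilon)$ directly on presentations: $N(\epsilon)$ has presentation $\langle\W_1,\W_2(\epsilon)\mid\Y_1,\Y_2(\epsilon)\rangle$, and there is an evident morphism from the free module $\langle\W_1,\W_2(-\epsilon)\rangle$ to $\langle\W_1,\W_2(\epsilon)\rangle$ that is the identity on the $\W_1$-part and the (double) transition morphism $\varphi(2\epsilon)$ on the $\W_2$-part; one checks this carries $\Y_1\cup\Y_2(-\epsilon)$ into $\langle\Y_1,\Y_2(\epsilon)\rangle$, hence descends to $f\colon M\to N(\epsilon)$. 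Define $g\colon N\to M(\epsilon)$ symmetrically. Then the composite $g(\epsilon)\circ f$ is, on each generator class, the transition morphism $\varphi(2\epsilon)$ by construction — the $\W_1$ generator goes identity then $\varphi(2\epsilon)$, the $\W_2(-\epsilon)$ generator goes $\varphi(2\epsilon)$ then identity — so $g(\epsilon)\circ f=\varphi_M(2\epsilon)$, and likewise $f(\epsilon)\circ g=\varphi_N(2\epsilon)$. This exhibits the $\epsilon$-interleaving.

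Finally, for the finitely presented case, I would observe that all the constructions above preserve finiteness: in the ``only if'' direction one may take $\W_1,\W_2$ to be finite bases of finite free covers (using that $M,N$ finitely presented implies finitely generated) and $\Y_1,\Y_2$ to be finite generating sets of the relevant kernels (using that $P_n$ is coherent, as invoked after Theorem~\ref{MinimalPresentationTheorem}, so that these kernels are finitely generated); in the ``if'' direction finiteness of the input data is simply carried through. The step I expect to be the main obstacle is the bookkeeping in the ``only if'' direction: correctly setting up the identifications of shifted free modules as submodules, verifying that the lifts $\tilde f,\tilde g$ can be assembled into \emph{genuine} surjections onto $M$ and $N$ whose kernels differ exactly by an $\vec\epsilon$-shift, and confirming that the interleaving identities $g(\epsilon)\circ f=\varphi_M(2\epsilon)$, $f(\epsilon)\circ g=\varphi_N(2\epsilon)$ translate into that precise matching of relation sets rather than merely into an equality up to the image of the relations. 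Making this compatibility exact — as opposed to ``up to homotopy,'' which is all Lemma~\ref{ExistenceAndHomotopyUniquenessOfLifts} gives directly — will require a careful choice of lifts and possibly an adjustment of one lift by an element of $\ker(\rho_N)$ or $\ker(\rho_M)$.
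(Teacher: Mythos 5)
Your setup for the ``only if'' direction is the right one --- lift $f$ and $g$ to free covers, and embed shifted free modules as submodules via Remark~\ref{ShiftInclusionRemark} --- and your ``if'' direction is correct (the paper dismisses it as immediate). But there is a genuine gap at exactly the spot you flag as ``the main obstacle,'' and the way you propose to close it is off track.

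You write that you will ``read off the relation sets $\Y_1$ and $\Y_2$ as generators of the kernels'' of the two surjections, and that the interleaving identities ``are precisely what forces the kernels of the two presentations to be related by the $\vec\epsilon$-shift.'' This does not follow. The two kernels are a priori just two submodules; there is no reason a generating set for one should transport, via a $\vec\epsilon$-shift, to a generating set for the other. You then suggest that a ``careful choice of lifts and possibly an adjustment of one lift by an element of $\ker(\rho_N)$ or $\ker(\rho_M)$'' will fix things; it will not, because the problem is not in the choice of lift but in the choice of relation set. No tweak of $\tilde f$ or $\tilde g$ by kernel elements produces the needed symmetry, since the kernel of the natural surjection onto $M$ is genuinely not the shift of the kernel of the natural surjection onto $N$.

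The key idea the paper uses, which your proposal is missing, is to take relation sets that are \emph{deliberately redundant} so that the shift symmetry holds by construction, and then to \emph{verify} that the resulting presentations are still presentations of $M$ and $N$. Concretely, set $R_{M,N}=\{y-\tilde f(y)\}_{y\in G_M(-\epsilon)}$ and $R_{N,M}=\{y-\tilde g(y)\}_{y\in G_N(-\epsilon)}$, and take $\Y_1=R_M\cup R_{N,M}$ and $\Y_2=R_N\cup R_{M,N}$. Then $\Y_1\cup\Y_2(-\epsilon)$ contains relations coming from \emph{both} $\tilde f$ and $\tilde g$, and so does $\Y_1(-\epsilon)\cup\Y_2$; the $\vec\epsilon$-shift symmetry is manifest. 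Of course, $\langle \Y_1,\Y_2(-\epsilon)\rangle$ now a priori looks too big to be the kernel of $\langle \W_1,\W_2(-\epsilon)\rangle\twoheadrightarrow M$, so one must prove it is not. This is where the interleaving identity $g\circ f=\varphi_M(2\epsilon)$ finally enters: the paper's Lemma~\ref{ChazalInterpolationLemma} shows that the image of the ``$\tilde f$-type'' relations already lies in the submodule generated by the image of the ``$\tilde g$-type'' relations, so the extra relations are harmless, and the identification $M\cong(M\oplus N(-\epsilon))/R$ goes through; the third isomorphism theorem then gives $M\cong\langle\W_1,\W_2(-\epsilon)\mid\Y_1,\Y_2(-\epsilon)\rangle$, and symmetrically for $N$. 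Without this trick of adding redundant relations and verifying against Lemma~\ref{ChazalInterpolationLemma}, the proof does not close. (Also, for the finitely presented case, no appeal to coherence is needed: $R_{M,N}$ and $R_{N,M}$ have one element per generator of $M$ resp. $N$, so finiteness is automatic.)
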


\begin{remark}
For $M$ and $N$ $\epsilon$-interleaved $n$-modules, our proof of the characterization theorem in fact furnishes an explicit construction of  presentations of $M$ and $N$ as in the statement of the theorem.
\end{remark}

\begin{proof}[Proof of the characterization theorem]
It's easy to see that if there exist $n$-graded sets $\W_1,\W_2$ and sets $\Y_1,\Y_2$ as in the statement of the theorem, then $M$ and $N$ are $\epsilon$-interleaved.    

To prove the converse, we lift to free covers of $M$ and $N$ a construction presented in the proof of \cite[Lemma 4.6]{chazal2009proximity}.  \cite[Lemma 4.6]{chazal2009proximity} was stated only for $1$-modules, but the result and its proof generalize immediately to $n$-modules.  

To keep notation simple, throughout the proof we'll write the $\epsilon$-shift $f(\epsilon)$ of a morphism $f$ of $n$-modules simply as $f$.

Let $f:M\to N(\epsilon)$, $g:N\to M(\epsilon)$ be $\epsilon$-interleaving morphisms.  Upon generalizing to $n$-modules, the proof of \cite[Lemma 4.6]{chazal2009proximity} yields the following result as a special case:

\begin{lem}\label{ChazalInterpolationLemma}  
Let $\kappa^1:M(-2\epsilon)\to M\oplus N(-\epsilon)$ be given by \[\kappa^1(y)=(\varphi_{M(-2\epsilon)}^{2\epsilon}(y),-f(y)).\]  
Let $\kappa^2:N(-\epsilon)\to  M\oplus N(-\epsilon)$ be given by \[\kappa^2(y)=(-g(y),y).\]  
Let $R\subset M\oplus N(-\epsilon)$ be the submodule generated by $\im \kappa^1 \cup \im \kappa^2$.
Then \[M\cong (M\oplus N(-\epsilon))/R.\]   
\end{lem}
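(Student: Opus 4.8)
The plan is to exhibit an explicit isomorphism $\psi:(M\oplus N(-\epsilon))/R\xrightarrow{\ \sim\ }M$ and verify it is well defined, surjective, and injective. The natural candidate is the map induced by the morphism $\Psi:M\oplus N(-\epsilon)\to M$ given on homogeneous elements by $\Psi(y,z)=y+g(z)$, where here $g(z)$ is shorthand for the appropriate transition-shifted image of $z\in N(-\epsilon)$ under $g:N\to M(\epsilon)$ (concretely, identifying $N(-\epsilon)$-grade $a$ with $N$-grade $a-\vec\epsilon$, applying $g$ lands in $M(\epsilon)$-grade $a-\vec\epsilon$, i.e. $M$-grade $a$, so $\Psi$ is indeed a degree-preserving morphism of $n$-modules; one keeps the notational abuse of Section~\ref{ShiftsOfModules} throughout). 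First I would check $\Psi$ is surjective: $\Psi(y,0)=y$ for all $y\in M$, so surjectivity is immediate.

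Next I would check $R\subset\ker\Psi$, which shows $\Psi$ descends to $\psi$ on the quotient. It suffices to check this on the generators $\im(\gamma_1)$ and $\im(\gamma_2)$. For $\gamma_2$: $\Psi(\gamma_2(z))=\Psi(-g(z),z)=-g(z)+g(z)=0$. For $\gamma_1$: for $y\in M(-2\epsilon)$ we get $\Psi(\gamma_1(y))=\varphi_{M(-2\epsilon)}(2\epsilon)(y)-g(f(y))$; but $g\circ f=\varphi_M(2\epsilon)$ by the $\epsilon$-interleaving relation (up to the standard identification of $g(\epsilon)\circ f$ with $g\circ f$), and $\varphi_{M(-2\epsilon)}(2\epsilon)(y)$ is exactly $\varphi_M(2\epsilon)$ applied under the grade identification, so the two terms cancel. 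Hence $\psi$ is a well-defined surjection.

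The main obstacle — and the only part requiring real care — is injectivity of $\psi$, i.e. showing $\ker\Psi\subset R$. Given a homogeneous $(y,z)\in\ker\Psi$, so $y=-g(z)$ in $M$, I would write $(y,z)=(y,z)-\gamma_2(z)+\gamma_2(z)$; since $\gamma_2(z)=(-g(z),z)=(y,z)$, this gives $(y,z)=\gamma_2(z)\in\im(\gamma_2)\subset R$ directly on homogeneous elements, and then extend by linearity over $k$ to all of $\ker\Psi$, using that everything decomposes over homogeneous summands. (I would double-check that the submodule generated by $\im\gamma_1\cup\im\gamma_2$ contains all these $k$-linear combinations, which is clear since it is in particular a $k$-subspace.) This also recovers the role of $\gamma_1$: it is what is needed to make the presentation symmetric in $M$ and $N$ and to track $f$, which is why it appears in the statement even though injectivity of $\psi$ used only $\gamma_2$; the $\gamma_1$ relations lie in $R$ precisely because of the interleaving identity, as verified above. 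Finally I would remark that the whole argument is symmetric in $M$ and $N$ (swapping $f\leftrightarrow g$, applying the shift functor), which is the content needed downstream for the Characterization Theorem, and note that if $M,N$ are finitely presented then the images of $\gamma_1,\gamma_2$ are finitely generated, so $R$ and the resulting presentation are finite — though that observation properly belongs to the deduction of Theorem~\ref{AlgebraicRealization} rather than to this lemma.
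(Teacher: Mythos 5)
Your proof is correct and is essentially the same argument as the paper's, just run in the opposite direction: the paper shows that the composite $M\hookrightarrow M\oplus N(-\epsilon)\twoheadrightarrow (M\oplus N(-\epsilon))/R$ is an isomorphism, while you directly build its inverse $(y,z)+R\mapsto y+g(z)$. In both versions the interleaving identity $g\circ f=\varphi_M(2\epsilon)$ is the sole nontrivial ingredient — the paper uses it to show $\im(\gamma_1)\subset\im(\gamma_2)$ (hence $\iota(M)\cap R=0$), you use it to show $\Psi$ kills $\im(\gamma_1)$ — and everything else is the same bookkeeping.
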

For convenience's sake, we reprove Lemma~\ref{ChazalInterpolationLemma} here. 
\begin{proof}    
Let \[\iota: M\to M\oplus N(-\epsilon)\] denote the inclusion, and let \[\zeta:M\oplus N(-\epsilon) \to  M\oplus N(-\epsilon)/R\] denote the quotient.  We'll show that $\zeta\circ \iota$ is an isomorphism.  For any $(y,z)\in M\oplus N(-\epsilon)$, $(-g(z),z)\in R$, so $\zeta\circ \iota\circ g(z)=(0,z)+R$.  Therefore $\zeta\circ\iota(g(z)+y)=(y,z)+R.$  Hence $\zeta\circ \iota$ is surjective.  

$\zeta\circ \iota$ is injective iff $\iota(M)\cap R=0$.  It's clear that $M\cap \im \kappa^2=0$.  Thus to show that $\zeta\circ \iota$ is injective it's enough to show that $\im \kappa^1\subset \im \kappa^2$.  If $y\in M(-2\epsilon)$, then since $\varphi_{M(-2\epsilon)}^{2\epsilon}(y)=g\circ f(y)$, \[\kappa^1(y)=(\varphi_{M(-2\epsilon)}^{2\epsilon}(y),-f(y))=(g\circ f(y),-f(y))=\kappa^2\circ f(-y).\]  Thus $\im \kappa^1\subset \im \kappa^2$ and so $\zeta\circ \iota$ is injective.

We conclude that $\zeta\circ \iota$ is an isomorphism.
\end{proof}

Now let $\langle \W_M|\Y_M\rangle$ be a presentation for $M$ and let $\langle \W_N|\Y_N\rangle$ be a presentation for $N$.  Without loss of generality we may assume that $M=\free[\W_M]/\langle \Y_M\rangle$ and $N=\free[\W_N]/\langle \Y_N\rangle$.  Let $\rho_M:\free[\W_M] \to M$, $\rho_N:\free[\W_N]\to N$ denote the quotient maps.  $\rho_M$ and $\rho_N$ are free covers for $M$ and $N$.  

Let ${\tilde f}:\free[\W_M]\to \free[\W_N(\epsilon)]$ be a lift of $f$ and let ${\tilde g}:\free[\W_N]\to \free[\W_M(\epsilon)]$ be a lift of $g$.
Let 
\begin{align*}
\Y_{M,N}&=\{y-{\tilde f}(y)\}_{y\in \W_M(-\epsilon)},\\
\Y_{N,M}&=\{y-{\tilde g}(y)\}_{y\in \W_N(-\epsilon)}.
\end{align*}  
Note that $\Y_{M,N}$ is a homogeneous subset of $\free[\W_M(-\epsilon),\W_N]$ and $\Y_{N,M}$ is a homogeneous subset of $\free[\W_M,\W_N(-\epsilon)]$.    

Let 
\begin{align*}
P_M&=\langle \W_M,\W_N(-\epsilon)|\Y_M,\Y_{N,M},\Y_N(-\epsilon),\Y_{M,N}(-\epsilon)\rangle\\
P_N&=\langle \W_M(-\epsilon),\W_N|\Y_M(-\epsilon),\Y_{N,M}(-\epsilon),\Y_N,\Y_{M,N}\rangle.
\end{align*}
By Remark~\ref{Rmk:ShiftsAndHomogeneousSubsets}, we can regard $\Y_{M,N}(-\epsilon)$ as a homogeneous subset of $\free[\W_M,\W_N(-\epsilon)]$, so $P_M$ is well defined.  Symmetrically, $P_N$ is well defined.  
 
We claim that $P_M$ is a presentation for $M$ and $P_N$ is a presentation for $N$.  We'll prove that $P_M$ is a presentation for $M$; the proof that $P_N$ is a presentation for $N$ is identical.  

Let \begin{align*}
F&=\free[\W_M,\W_N(-\epsilon)],\\
K&=\langle \Y_M,\Y_{N,M},\Y_N(-\epsilon),\Y_{M,N}(-\epsilon)\rangle \\
K'&=\langle \Y_M,\Y_N(-\epsilon)\rangle.
\end{align*}

Let $\rho:F\to F/K'$ denote the quotient map.  Clearly, we may identify $F/K'$ with $M\oplus N(-\epsilon)$ and $\rho$ with the map
\[
\left(
\begin{matrix}
\rho_M   &0 \\
0                             &\rho_N  
\end{matrix}
\right)
: \free[\W_1]\oplus\free[\W_2(-\epsilon)]\to M\oplus N(-\epsilon).
\]
We'll check that $\rho$ maps $\langle \Y_{M,N}(-\epsilon) \rangle$ surjectively to $\im \kappa^1$ and $\langle \Y_{N,M} \rangle$ surjectively to $\im \kappa^2$, so that under the identification of $F/K'$ with $M\oplus N(-\epsilon)$, $K/K'=R$.  Given this, it follows that $P_M$ is a presentation for $M$ by Lemma~\ref{ChazalInterpolationLemma} and the third isomorphism theorem for modules \cite{dummit1999abstract}.
\begin{sloppypar}
We first check that $\langle \rho(\Y_{M,N}(-\epsilon)) \rangle = \im\kappa^1$.  Viewing $\Y_{M,N}(-\epsilon)$ as a subset of $\free[\W_M, \W_N(-\epsilon)]$, \[\Y_{M,N}(-\epsilon)=\{\varphi_{\free[\W_M(-2\epsilon)]}^{2\epsilon}(y)-{\tilde f}(y)\}_{y\in \W_M(-2\epsilon)}.\]  $\varphi_{\free[\W_M(-2\epsilon)]}^{2\epsilon}$ is a lift of $\varphi_{M(-2\epsilon)}^{2\epsilon}$ and ${\tilde f}$ is a lift of $f$, so for any $y\in \W_M(-2\epsilon)$, \[\rho(\varphi_{\free[\W_M(-2\epsilon)]}^{2\epsilon}(y)-{\tilde f}(y))=(\varphi_{M(-2\epsilon)}^{2\epsilon}\circ\rho_M(y),-f\circ \rho_M(y))= \kappa^1\circ \rho_M(y).\]  Thus $\rho(\Y_{M,N}(-\epsilon))\subset \im \kappa^1$.  Since $\W_M$ generates $\free[\W_M]$ and $\rho_M$ is surjective, we have that $\rho\langle \Y_{M,N}(-\epsilon) \rangle =\im \kappa^1$.

The check that $\rho\langle \Y_{N,M} \rangle = \im \kappa^2$ is similar to the above, but simpler.  $\Y_{N,M}=\{y-{\tilde g}(y)\}_{y\in \W_N(-\epsilon)}$.  ${\tilde g}$ is a lift of $g$ so for any $y\in \W_N(-\epsilon)$, \[\rho(y-{\tilde g}(y))=(-g\circ \rho_N(y),\rho_N(y))=\kappa^2\circ \rho_N(y).\]  Thus $\rho(\Y_{N,M})\subset \im\kappa^2$.  Since $\W_N$ generates $\free[\W_N]$ and $\rho_N$ is surjective, we have that $ \rho\langle \Y_{N,M}\rangle=\im \kappa^2$.   
\end{sloppypar}
This completes the verification that $P_M$ is a presentation for $M$.  

Now, taking \[\W_1=\W_M,\quad \W_2=\W_N,\quad \Y_1=\Y_M \cup \Y_{N,M},\quad \Y_2=\Y_N \cup \Y_{M,N}\] gives the first statement of Theorem~\ref{AlgebraicRealization}.  If $M$ and $N$ are finitely presented then $\W_M, \W_N, \Y_M, \Y_N, \Y_{M,N}$, and $\Y_{N,M}$ can all be taken to be finite; the second statement of Theorem~\ref{AlgebraicRealization} follows. \end{proof} 

\begin{example}\label{Ex:CharacterizationExample} We now present an explicit example of the compatible presentations of $\epsilon$-interleaved $n$-modules constructed in the proof of the characterization theorem~\ref{AlgebraicRealization}.
Let $M$ and $N$ be $1$-modules given by 
\begin{align*}
M&=\langle (a,0)|x^3 a\rangle,\\
N&=\langle (b,1)|x^2 b\rangle.
\end{align*}

$M$ and $N$ are $1$-interleaved: Let $\tilde f:\free[(a,0)]\to \free[(b,0)]$ be the morphism which sends $a$ to $b$ and let $\tilde g:\free[(b,1)] \to \free[(a,-1)]$ be the morphism which sends $b$ to $x^2 a$.  $\tilde f$ and $\tilde g$ descend to $1$-interleaving morphisms $f:M\to N(1)$, $g:N\to M(1)$, so that  $\tilde f$ is a lift of $f$ and $\tilde g$ is a lift of $g$.  

Let 
\[
\W_1=\{(a,0)\},\quad \W_2=\{(b,1)\},
\]
and define $\Y_1, \Y_2\subset  \free[\W_1, \W_2]$ by 
\[
\Y_1=\{x^3 a,x b-x^2 a\}, \quad \Y_2=\{x^2 b,xa-b\}.
\]
The construction in the proof of the characterization theorem gives us that 
\begin{align*}
M &\cong \langle\W_1,\W_2(-\epsilon)|\Y_1,\Y_2(-\epsilon)\rangle, \\
N&\cong \langle\W_1(-\epsilon),\W_2|\Y_1(-\epsilon),\Y_2\rangle. 
\end{align*} 
Noting that $\free[\W_1,\W_2(-\epsilon)]$ and $\free[\W_1(-\epsilon),\W_2]$ are canonically isomorphic to the free submodules $\langle a, xb\rangle$ and $\langle xa, b\rangle$  of  $\free[(a,0),(b,1)]$, we thus have that
\begin{align*}
M&\cong \langle a, xb\rangle/  \langle  x^3 a,xb-x^2 a,x^3 b,x^2 a- xb \rangle, \\
N&\cong  \langle xa, b\rangle/  \langle  x^4 a,x^2 b-x^3 a,x^2 b,xa- b  \rangle. 
\end{align*} 
Equivalently, but more intrinsically, we may write
\begin{align*}
M&\cong \langle (a,0),(b,2)|x^3 a,b-x^2 a,x^2 b,x^2 a- b\rangle\\
N&\cong \langle (a,1),(b,1)|x^3 a,x^2 b-x^2 a,x^2 b,a- b\rangle.
\end{align*}
\end{example}

  \section{Universality of the Interleaving and Bottleneck Distances}\label{Sec:OptimalitySection}
This section presents our universality results for $d_I$ and $d_B$.  We also consider the stability properties of $d_I$.  

\subsubsection{A Category of $\R^n$-valued functions}\label{Sec:SublevelsetFiltrations}
For $n\geq 1$, let $\CS$ be the category defined as follows:
\begin{enumerate*}
\item Objects of $\CS$ are functions $\gamma:X\to \R^n$, for $X$ any topological space.  
\item For functions $\gamma^X:X\to \R^n$ and $\gamma^Y:Y\to \R^n$, $\hom(\gamma^X,\gamma^Y)$ is the set of continuous functions $f:X\to Y$ such that $\gamma^X(p)\geq \gamma^Y\circ f(p)$ for all $p \in X$.  
\end{enumerate*}

\subsubsection{Multidimensional Filtrations}\label{GeometricPreliminariesSection}

Recall that in \cref{Sec:Background}, we defined an $n$-dimensional filtration to be a functor $\F:\RCat^n\to \Top$ that maps each element of $\hom(\RCat^n)$ to inclusion.  Let $\nfilt$ denote the category whose objects are $n$-dimensional filtrations and whose morphisms are natural transformations.   

\subsubsection{The Sublevelset Filtration Functor}\label{Sec:SublevelsetFiltrations}
In \cref{Ex:SublevelsetFiltration}, we defined the $n$-dimensional sublevelset filtration $\FS(\gamma)$ of any function $\gamma:T\to \R^n$ with $T$ a topological space.  This gives a map 
\[\FS:\obj \CS\to \obj \nfilt.\]  In fact, defining the action of $\FS$ on morphisms in the obvious way, we obtain a functor \[\FS:\CS\to \nfilt.\]

\subsubsection{Multidimensional Persistent Homology}
As in \cref{Sec:Background}, for $i\geq 0$ let $H_i$ denote the $i^{\th}$ singular homology functor with coefficients in the field $k$.  $H_i$ induces a functor \[H_i:\nfilt\to \nmod,\] the  \emph{$i^{\th}$ persistent homology functor}.  When no confusion is likely, we'll often write the composition $H_i \FS:\CS\to\nmod$ simply as $H_i$.

\subsection{Stability of the Interleaving Distance}

\subsubsection{A Metric on $[\obj \CS ]$}
For $\gamma:X\to \R^n$ a function, let 
\begin{equation*}
\|\gamma\|_{\infty}=
\begin{cases}
\sup_{p\in X}\|\gamma(p)\|_{\infty} &\textup{if } X\ne \emptyset, \\
0 &\textup{if } X=\emptyset.
\end{cases}
\end{equation*} 
Given $\gamma^X:X\to \R^n ,\gamma^Y:Y\to \R^n$, we let \[d_\infty(\gamma^X,\gamma^Y)=\inf_{h\in {\mathcal H}} \|\gamma^X-\gamma^Y\circ h\|_{\infty},\]
where ${\mathcal H}$ is the set of homeomorphisms from $X$ to $Y$.  $d_\infty$ descends to a metric on $[\obj \CS]$, which we also write as $d_\infty$.  
Note that $d_\infty(\gamma^X,\gamma^Y)=\infty$ if $X$ and $Y$ are not homeomorphic.

When $n=1$, $d^S$ is known as the {\it natural pseudo-distance}; it features prominently in the work of Patrizio Frosini and his coauthors on persistent homology---see \cite{d2010natural}, for example.  

\begin{remark}\label{GeometricInterleavingRemark}
It is interesting to note that $d_\infty$ can be defined equivalently as an interleaving distance.  To define an interleaving distance on $[\obj \CS]$, we need to give a suitable definition of $\epsilon$-interleavings in the category $\CS$.  To do this, for each $\epsilon\geq 0$ we have to specify an $\epsilon$-shift functor $(\cdot)(\epsilon): \CS\to \CS$ and a transition morphism $\varphi_{\gamma}^\epsilon:\gamma\to \gamma(\epsilon)$ for every $\gamma\in \obj \CS$.  

For $\gamma:X\to \R^n$ in $\obj \CS$, we let $\gamma(\epsilon)=\gamma'$, where $\gamma':X\to \R^n$ is given by $\gamma'(x)=\gamma(x)-\vec\epsilon$; for $f$ a morphism in $\CS$, we let $f(\epsilon)=f$; and we let $\varphi_{\gamma}^\epsilon=\id_X$.

It's easy to check that the interleaving distance induced by these choices is equal to $d_\infty$.  
\end{remark}

\subsubsection{Stability of the Bottleneck Distance}
Here is the fundamental stability result for sublevelset persistent homology.  
\begin{thm}[Stability of $d_B$ \cite{cohen2007stability,chazal2009proximity}]\label{OrdinarySublevelsetStability}  For $i\geq 0$, topological spaces $X,Y$, and functions 
\[
\gamma^X:X\to \R, \quad
\gamma^Y:Y\to \R
\]
such that $H_i(\gamma^X)$ and $H_i(\gamma^Y)$ are \pfd, we have
\[d_B(H_i(\gamma^X),H_i(\gamma^Y))\leq d_\infty(\gamma^X,\gamma^Y).\]
\end{thm}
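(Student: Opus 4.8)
The plan is to reduce the claimed inequality to two already-established facts: the Algebraic Stability Theorem (Theorem~\ref{AlgebraicStability}) and the fact, noted in Remark~\ref{GeometricInterleavingRemark}, that $d^S$ is itself an interleaving distance on the category $C^S$. The whole argument is a ``transport of interleavings along a functor'' argument, so the key observation I would isolate first is that the composite functor $H_i\circ F^S : C^S\to\onemod$ is compatible with the shift and transition-morphism structures on both sides; once that is in hand the inequality is essentially automatic.

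First I would fix $X$, $Y$, $\gamma_X$, $\gamma_Y$ and set $\epsilon = d^S((X,\gamma_X),(Y,\gamma_Y))$. If $\epsilon=\infty$ there is nothing to prove, so assume $\epsilon<\infty$; then $X$ and $Y$ are homeomorphic. Fix $\delta>0$. By the definition of $d^S$ there is a homeomorphism $h:X\to Y$ with $\|\gamma_X - \gamma_Y\circ h\|_\infty \leq \epsilon+\delta$. The next step is to observe that this single homeomorphism $h$ furnishes, in the category $C^S$, an $(\epsilon+\delta)$-interleaving between $(X,\gamma_X)$ and $(Y,\gamma_Y)$: concretely, the bound $\|\gamma_X-\gamma_Y\circ h\|_\infty\leq\epsilon+\delta$ says exactly that $\gamma_X(x)\geq \gamma_Y(h(x))-\vec{(\epsilon+\delta)}$ and $\gamma_Y(h(x))\geq \gamma_X(x)-\vec{(\epsilon+\delta)}$ for all $x$, which unwinds to morphisms $h:(X,\gamma_X)\to (Y,\gamma_Y)(\epsilon+\delta)$ and $h^{-1}:(Y,\gamma_Y)\to (X,\gamma_X)(\epsilon+\delta)$ in $C^S$ whose composites (in either order) are the respective $(2\epsilon+2\delta)$-transition maps, since those transition maps are identities on the underlying spaces by Remark~\ref{GeometricInterleavingRemark}. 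Equivalently, and more directly for readers who prefer not to invoke that remark: for each $a\in\R^n$ the bound gives inclusions $(\gamma_X)_a \hookrightarrow h^{-1}((\gamma_Y)_{a+\vec{(\epsilon+\delta)}})$ and the symmetric one, and these assemble into morphisms of $1$-filtrations between $F^S(X,\gamma_X)$ and $F^S(Y,\gamma_Y)(\epsilon+\delta)$.

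Then I would apply the functor $H_i$. A functor carries commuting diagrams to commuting diagrams, and one checks that $H_i\circ F^S$ sends the $\epsilon'$-shift functor on $C^S$ to the $\epsilon'$-shift functor on $\onemod$ (because $F^S(X,\gamma)(\epsilon')_a = F^S(X,\gamma)_{a+\vec{\epsilon'}}$ and $H_i$ is applied levelwise) and sends transition morphisms to transition morphisms. Hence applying $H_i$ to the $(\epsilon+\delta)$-interleaving morphisms just produced yields morphisms $H_i\circ F^S(X,\gamma_X)\to \bigl(H_i\circ F^S(Y,\gamma_Y)\bigr)(\epsilon+\delta)$ and back, whose composites are the transition morphisms $\varphi(2\epsilon+2\delta)$; that is, $H_i\circ F^S(X,\gamma_X)$ and $H_i\circ F^S(Y,\gamma_Y)$ are $(\epsilon+\delta)$-interleaved $1$-modules. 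By hypothesis these two modules are pointwise finite, so the Algebraic Stability Theorem~\ref{AlgebraicStability} gives $d_B(H_i\circ F^S(X,\gamma_X),H_i\circ F^S(Y,\gamma_Y))\leq \epsilon+\delta$. Since $\delta>0$ was arbitrary, $d_B(\cdots)\leq\epsilon = d^S((X,\gamma_X),(Y,\gamma_Y))$, as desired.

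The only step that requires genuine care — and thus the ``main obstacle,'' though it is a routine one — is the bookkeeping verification that $h$ really does define an $\epsilon$-interleaving: namely that $\|\gamma_X-\gamma_Y\circ h\|_\infty\leq\epsilon+\delta$ implies the required containments of sublevel sets in both directions, and that the two composite morphisms coincide with the transition maps rather than merely being ``close.'' This is where the fact that the transition morphism $\varphi_{(X,\gamma)}(\epsilon')$ in $C^S$ is the identity on the underlying space (Remark~\ref{GeometricInterleavingRemark}) is used essentially: it is what makes the interleaving identities hold on the nose after passing through $H_i$. Everything else is formal functoriality plus an invocation of Theorem~\ref{AlgebraicStability}. (Strictly, $H_i\circ F^S$ lands in $\onemod$; if one wants $n$-modules in general the same argument applies verbatim, but for this theorem $n=1$.)
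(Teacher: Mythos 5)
Your proposal is correct. Note, though, that the paper does not actually prove Theorem~\ref{OrdinarySublevelsetStability}; it quotes it from \cite{cohen2007stability,chazal2009proximity}. The argument you give — produce an $(\epsilon+\delta)$-interleaving of filtrations from a near-optimal homeomorphism $h$ via sublevelset inclusions, apply $H_i$ to get an $(\epsilon+\delta)$-interleaving of persistence modules, then invoke the Algebraic Stability Theorem~\ref{AlgebraicStability} and let $\delta\to 0$ — is exactly the route taken in \cite{chazal2009proximity} to deduce the geometric stability statement from the algebraic one, and the first two steps are precisely what the paper carries out in its proof of the multidimensional analogue Theorem~\ref{MultidimensionalSublevelsetStability} (which stops short of the Algebraic Stability step because it is bounding $d_I$ rather than $d_B$). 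So this is the same approach, just spelled out rather than cited. Two small cosmetic remarks: since $n=1$ here, the vectors $\vec{(\epsilon+\delta)}$ are just scalars; and when invoking the definition of $d^S$ as an infimum you should say there exists $h$ with $\|\gamma_X-\gamma_Y\circ h\|_\infty<\epsilon+\delta$ (strict inequality is what the infimum gives, though $\leq$ of course suffices for the rest of the argument).
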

As shown in \cite{chazal2009proximity}, this result is an immediate corollary of the algebraic stability theorem~\ref{AlgebraicStability}.

  \subsubsection{Stability of the Interleaving Distance}\label{MultidimensionalStabilitySection}

We now make the easy observation that $d_I$ is stable with respect to multidimensional sublevelset persistent homology; in view of the isometry theorem (\cref{InterleavingEqualsBottleneck}) this generalizes Theorem~\ref{OrdinarySublevelsetStability}.  In fact, $d_I$ is stable with respect to multidimensional persistent homology in several other senses as well; see \cite{lesnick2012multidimensional} for further (easy) stability results.  

For $i\geq 0$, we say a pseudometric $d$ on $[\obj \nmod]$ is \emph{$i$-stable} if 
for any topological spaces $X,Y$ and functions 
\[\gamma^X:X\to \R^n, \quad \gamma^Y:Y\to \R^n,\] we have
\[d(H_i(\gamma^X),H_i(\gamma^Y))\leq d_\infty(\gamma^X,\gamma^Y).\]
We say a pseudometric on $[\obj \nmod]$ is \emph{stable} if it is $i$-stable for all $i\geq 0$.
\begin{thm}\label{MultidimensionalSublevelsetStability} $d_I$ is stable.  \end{thm}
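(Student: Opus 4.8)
The plan is to combine two elementary facts: first, that $d^S$ is itself the interleaving distance on $C^S$ associated to the shift functors and transition morphisms of Remark~\ref{GeometricInterleavingRemark}; and second, that any functor compatible with shifts and transition morphisms carries $\epsilon$-interleavings to $\epsilon$-interleavings. Concretely, I would first record the following general observation. Let $\C$ and $\D$ be categories each equipped with shift functors $(\cdot)(\epsilon)$ and transition morphisms $\varphi_{(\cdot)}(\epsilon)$, $\epsilon\geq 0$, so that $\epsilon$-interleavings are defined in each, and let $\Phi:\C\to\D$ be a functor with $\Phi(A(\epsilon))=\Phi(A)(\epsilon)$ and $\Phi(\varphi_A(\epsilon))=\varphi_{\Phi(A)}(\epsilon)$, naturally in $A$ and for all $\epsilon\geq 0$. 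Then if $A$ and $A'$ are $\epsilon$-interleaved via morphisms $f,g$, applying $\Phi$ together with the two compatibility hypotheses turns the interleaving identities $g(\epsilon)\circ f=\varphi_A(2\epsilon)$ and $f(\epsilon)\circ g=\varphi_{A'}(2\epsilon)$ into the corresponding identities for $\Phi(f),\Phi(g)$; hence $\Phi(A)$ and $\Phi(A')$ are $\epsilon$-interleaved. The proof is a one-line diagram chase.

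I would then check that $F^S$, $H_i$, and therefore $\Phi:=H_i\circ F^S:C^S\to\nmod$, satisfy these hypotheses. For $F^S$: unwinding definitions, $F^S(X,\gamma)(\epsilon)_a=\{x\in X\mid\gamma(x)\leq a+\vec\epsilon\}=\{x\in X\mid\gamma(x)-\vec\epsilon\leq a\}=F^S((X,\gamma)(\epsilon))_a$, naturally in $(X,\gamma)$, and under this identification $\varphi_{F^S(X,\gamma)}(\epsilon)$ --- which is summandwise the inclusion $\gamma_a\hookrightarrow\gamma_{a+\vec\epsilon}$ --- is exactly $F^S$ applied to $\varphi_{(X,\gamma)}(\epsilon)=\id_X:(X,\gamma)\to(X,\gamma)(\epsilon)$. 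For $H_i$: since $H_i:\nfilt\to\nmod$ acts summandwise, $H_i(X(\epsilon))_a=H_i(X_{a+\vec\epsilon})=H_i(X)(\epsilon)_a$, and it carries transition morphisms to transition morphisms by functoriality. So $\Phi$ is compatible with shifts and transition morphisms.

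Finally I would assemble the argument. If $T_1,T_2\in\obj(C^S)$ have non-homeomorphic underlying spaces then $d^S(T_1,T_2)=\infty$ and there is nothing to prove. Otherwise fix $\epsilon>d^S(T_1,T_2)$; by Remark~\ref{GeometricInterleavingRemark} (or directly: pick a homeomorphism $h$ with $\|\gamma_{T_1}-\gamma_{T_2}\circ h\|_\infty\leq\epsilon$, and observe that $h$ and $h^{-1}$, appropriately shifted, are $\epsilon$-interleaving morphisms in $C^S$, since their composites are identity maps on the underlying space and these are precisely the transition morphisms $\varphi(2\epsilon)$ of $C^S$), $T_1$ and $T_2$ are $\epsilon$-interleaved in $C^S$. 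The general observation applied to $\Phi=H_i\circ F^S$ then gives that $H_i\circ F^S(T_1)$ and $H_i\circ F^S(T_2)$ are $\epsilon$-interleaved $n$-modules, so $d_I(H_i\circ F^S(T_1),H_i\circ F^S(T_2))\leq\epsilon$; letting $\epsilon\to d^S(T_1,T_2)$ proves the $i$-stability inequality, and since $i$ was arbitrary $d_I$ is stable. No step here is genuinely difficult --- consistent with the remark in the text that this is a trivial observation --- and the only mild care needed is in the bookkeeping that makes $F^S$ commute strictly with the shift functors and in the routine identification of $d^S$ with the $C^S$-interleaving distance.
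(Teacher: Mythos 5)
Your proof is correct and is, at its core, the same argument as the paper's: transfer the proximity between $(X,\gamma_X)$ and $(Y,\gamma_Y)$ into an $\epsilon$-interleaving of $n$-modules by composing a chosen homeomorphism $h$ with sublevelset inclusions and then applying $H_i$. The paper does this directly — it produces the $\epsilon'$-interleaving morphisms between $H_i\circ F^S(X,\gamma_X)$ and $H_i\circ F^S(X,\gamma_Y\circ h)$ as the images under $H_i$ of the inclusions $F^S(X,\gamma_X)_u\subset F^S(X,\gamma_Y\circ h)_{u+\vec\epsilon'}$ and its counterpart, then notes that $(X,\gamma_Y\circ h)\cong(Y,\gamma_Y)$ in $C^S$ gives a $0$-interleaving to finish. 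You package the same two facts (that $d^S$ is the $C^S$-interleaving distance, per Remark~\ref{GeometricInterleavingRemark}, and that a shift- and transition-compatible functor sends $\epsilon$-interleavings to $\epsilon$-interleavings) into an explicit general lemma and then instantiate it with $\Phi=H_i\circ F^S$; this is a slightly more abstract presentation of the same chain of reasoning, not a different route, and there is no gap.
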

\begin{proof}  
If $d_\infty(\gamma^X,\gamma^Y)=\epsilon$, then for any $\delta>\epsilon$, there exists a homeomorphism $h:X\to Y$ such that  for $a\in \R^n$, $\FS(\gamma^X)_a\subset \FS(\gamma^Y\circ h)_{a+\vec\delta}$ and $\FS(\gamma^Y\circ h)_a\subset \FS(\gamma^X)_{a+\vec\delta}$.  
The images of these inclusions under the $i^{th}$ singular homology functor define $\delta$-interleaving morphisms between $H_i(\gamma^X)$ and $H_i(\gamma^Y\circ h)$.  $\gamma^Y\circ h$ and $\gamma^Y$ are isomorphic objects of $\CS$, so $H_i(\gamma^Y\circ h)$ and $H_i(\gamma^Y)$ are isomorphic, i.e., $0$-interleaved.  Thus $H_i(\gamma^X)$ and $H_i(\gamma^Y)$ are $\delta$-interleaved.  It follows that \[d_I(H_i(\gamma^X),H_i(\gamma^Y))\leq \epsilon,\] as needed. 
\end{proof}

  \subsection{Universality of the Interleaving Distance}\label{OptimalityGeneralities}
We now are ready to formulate and prove our main universality results.

\subsubsection{Definitions of Universality}\label{Sec:OptimalityDefs}
For $i\geq 0$, we say that a pseudometric $d$ on $[\obj \nmod]$ is \emph{$i$-universal} if $d$ is $i$-stable and for any other $i$-stable metric $d'$ on $[\obj \nmod]$, $d'(M,N)\leq d(M,N)$ for all $M,N\in \im H_i \FS$.

We say a pseudometric $d$ on $[\obj \nmod]$ is \emph{universal} if $d$ is stable and for any other stable metric $d'$ on $[\obj \nmod]$, $d'(M,N)\leq d(M,N)$ for all $M,N$ such that there exists $i\geq 0$ with $M,N\in \im H_i \FS$.

Recall that $k$ is the field of coefficients with respect to which we have defined $\nmod$ and $H_i$.  When $k$ is a prime field, our definitions can be simplified:   

\begin{lem}
In the case that our field of coefficients $k$ is prime,
\begin{enumerate}[(i)]
\item for $i\geq 1$, a pseudometric $d$ on $[\obj \nmod]$ is $i$-universal if and only if $d$ is $i$-stable and for any other $i$-stable metric $d'$ on $[\obj \nmod]$, $d'\leq d$,
\item a pseudometric $d$ on $[\obj \nmod]$ is universal if and only if $d$ is stable and for any other stable metric $d'$ on $[\obj \nmod]$, $d'\leq d$.
\end{enumerate}
\end{lem} 
\begin{proof}
Proposition~\ref{RealizationProp} below implies that when $k$ is a prime field, $H_i$ is essentially surjective for $i\geq 1$.  Given this, the result is immediate.  
\end{proof}

\subsubsection{The Main Universality Result}
\begin{thm}\label{MainOptimality} If $k$ is a prime field and $i\geq 1$, then $d_I$ is $i$-universal.  
\end{thm}
We give the proof of \cref{MainOptimality} below.

\begin{cor}\label{CorOptimality}
If $k$ is a prime field then $d_I$ is universal.
\end{cor}
\begin{proof}[Proof of Corollary~\ref{CorOptimality}]
Theorem~\ref{MultidimensionalSublevelsetStability} shows that $d_I$ is stable.  Let $d$ be another stable metric.  Then $d$ is in particular 1-stable, so by Theorem~\ref{MainOptimality}, $d\leq d_I$.  
\end{proof}
In \cref{Sec:UniversalityOfBottleneck}, we also present an analogue of Theorems~\ref{MainOptimality} for \pfd $1$-modules which holds for arbitrary fields $k$ and $i\geq 0$.

I suspect that Theorem~\ref{MainOptimality} strengthens as follows:
\begin{conjecture}\label{Conj:ZeroOptimality}
For any field $k$ and $i\geq 0$, $d_I$ is $i$-universal.
\end{conjecture}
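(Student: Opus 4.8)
The natural line of attack is to extend the proof of Theorem~\ref{MainOptimality}. The $i$-stability half of $i$-optimality is already Theorem~\ref{MultidimensionalSublevelsetStability}, so the content to be supplied is the minimality half: for every $i$-stable metric $d'$ one needs $d'\leq d_I$. Recall that the proof of Theorem~\ref{MainOptimality}, for $i\geq 1$ and $k$ a prime field, runs in three stages. Given $\epsilon$-interleaved $M,N$ one first invokes the Characterization Theorem~\ref{AlgebraicRealization} to obtain presentations $M\cong\langle\W_1,\W_2(-\epsilon)|\Y_1,\Y_2(-\epsilon)\rangle$ and $N\cong\langle\W_1(-\epsilon),\W_2|\Y_1(-\epsilon),\Y_2\rangle$ built from common combinatorial data. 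One then realizes this data geometrically: one builds a single CW complex $Z$ (a $0$-cell, an $i$-sphere for each generator in $\W_1\cup\W_2$, an $(i+1)$-cell for each relation in $\Y_1\cup\Y_2$, with attaching maps of suitable integer degrees) together with two functions $\gamma_M,\gamma_N:Z\to\R^n$ recording the grade at which each cell enters in the presentation of $M$ and of $N$ respectively, so that $H_i\circ F^S(Z,\gamma_M)\cong M$, $H_i\circ F^S(Z,\gamma_N)\cong N$, and $\|\gamma_M-\gamma_N\|_\infty\leq\epsilon$. Finally, for any $i$-stable metric $d'$ one gets $d'(M,N)=d'(H_i\circ F^S(Z,\gamma_M),H_i\circ F^S(Z,\gamma_N))\leq d^S((Z,\gamma_M),(Z,\gamma_N))\leq\epsilon$, and letting $\epsilon\downarrow d_I(M,N)$ yields $d'\leq d_I$. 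Stages one and three survive verbatim in full generality; only the geometric realization in stage two is at issue, and that is where I would concentrate.

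For the case $i=0$ I expect the obstruction to be mild. Since $k$ is flat over its prime subfield $k_0$, one has $H_i(-;k)\cong H_i(-;k_0)\otimes_{k_0}k$ naturally, so $\im(H_0\circ F^S)$ does not depend on $k$: it consists of the ``merge-type'' modules, free on the path components of each sublevelset with incidence transition maps. I would realize an $\epsilon$-interleaving of two such modules by a $0$- and $1$-dimensional filtered complex (points for components, arcs entering at controlled grades that implement the merges), imitating the $0^{\th}$ $1$-D argument of d'Amico et al.~\cite{d2010natural} and of Theorem~\ref{PFOptimality}, but now starting directly from the compatible presentations produced by Theorem~\ref{AlgebraicRealization} rather than from a structure theorem. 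This step should be field-independent, and it is precisely the fact that merge-type modules form a proper sub-collection of $\obj^*(\nmod)$ that forces $0$-optimality to be phrased relative to $\im(H_0\circ F^S)$.

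The main obstacle — and the reason the statement remains conjectural — is the geometric realization in stage two when $i\geq 1$ and $k\neq k_0$. The naturality $H_i(-;k)\cong H_i(-;k_0)\otimes_{k_0}k$ shows that $\im(H_i\circ F^S)$ over $k$ is exactly the image of the base-change functor $(\cdot)\otimes_{k_0}k$ applied to $\im(H_i\circ F^S)$ over $k_0$; in particular it is a proper sub-collection of the finitely presented $n$-modules over $k$ (a two-generator, one-relation module whose single relation has a coefficient outside $k_0$, say over $k=\mathbb{F}_4$, does not lie in it), so the definition of $i$-optimality should first be relativized to $\im(H_i\circ F^S)$, exactly as for $i=0$. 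Granting that relativization, the plan is: given $M_0,N_0$ over $k_0$ such that $M:=M_0\otimes_{k_0}k$ and $N:=N_0\otimes_{k_0}k$ are $\epsilon$-interleaved over $k$, show $M_0$ and $N_0$ are $(\epsilon+\delta)$-interleaved over $k_0$ for all $\delta>0$, then apply Theorem~\ref{MainOptimality} over $k_0$ and base-change the realizing filtration. The hard part is exactly this descent of interleavings: the interleaving identities on a pair $(f,g)$ cut out a $k_0$-subscheme of a space of pairs of module morphisms that acquires a $k$-point but need not acquire a $k_0$-point, and it is not clear that enlarging $\epsilon$ by $\delta$ repairs this. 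I would test the descent first in the smallest cases ($n=1$, $k=\mathbb{F}_4$ or $k=\Q(\sqrt 2)$); should it fail there, the conjecture — if true at all — would seem to require realizing interleavings in a category strictly richer than $C^S$ (for instance, filtered spaces carrying local systems of $k$-vector spaces) together with a suitably strengthened notion of stability, which appears to demand a genuinely new idea.
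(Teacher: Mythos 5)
This statement is not a theorem in the paper: it is Conjecture~\ref{Conj:ZeroOptimality}, explicitly left open, and the paper records only that it would follow from a conjectural extension of Proposition~\ref{RealizationProp} to arbitrary fields and $i\in\Z_{\geq 0}$ (the unnumbered conjecture immediately following the proof of Theorem~\ref{MainOptimality}). There is therefore no proof in the paper to compare against, and a blind ``proof proposal'' that actually closed the statement would have been a small research contribution. You did not pretend to have one; you correctly recognized the statement as open, reduced it to the right subproblem, and explained where the difficulty sits. That is the right response.

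Your reduction matches the paper's. The stability half is Theorem~\ref{MultidimensionalSublevelsetStability}; the content is the minimality half, which in the proved cases rests entirely on Proposition~\ref{RealizationProp}; and the missing step is precisely a field-independent geometric realization of $\epsilon$-interleaved modules (or of modules in $\im(H_i\circ F^S)$) as sublevelset persistent homology of a pair of functions with $L^\infty$-distance at most $\epsilon$. This is exactly what the paper's follow-up conjecture asserts. Your universal-coefficient observation $H_i(-;k)\cong H_i(-;k_0)\otimes_{k_0}k$ (with $k_0$ the prime subfield) is correct and sharply identifies the obstruction: $\im(H_i\circ F^S)$ over $k$ is forced into the image of base change from $k_0$, so the constructive method of Section~\ref{Sec:ProofGeoLifts} (integer-degree attaching maps, hence $k_0$-coefficient boundary maps) cannot by itself produce modules with genuinely non-$k_0$ structure. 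You also put your finger on a point the paper glosses over: for $n\geq 2$ base change $(\cdot)\otimes_{k_0}k$ need not be essentially surjective on isomorphism classes of finitely presented $n$-modules (for $n=1$ it is, by the structure theorem), and by the reasoning of Remark~\ref{Rmk:ZeroOptimalityRemark} this would make the unrelativized $i$-optimality in Conjecture~\ref{Conj:ZeroOptimality} unachievable by \emph{any} metric for such $k$ and $n$. So as literally stated the conjecture likely needs the same relativization to $\im(H_i\circ F^S)$ already used for $i=0$; the paper's follow-up conjecture is in fact written with that restriction, so the intended reading is probably the relativized one.

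The part of your sketch that remains unverified is, of course, the part that would constitute progress. Your $i=0$ plan (realize merge-type modules by a filtered graph built directly from the Characterization Theorem's compatible presentations) is plausible but not carried out, and the paper itself only treats $i=0$ in the one-dimensional pointwise-finite setting (Theorem~\ref{PFOptimality}, via the structure theorem) rather than in the $n$-dimensional form you would need. Your descent plan for $i\geq 1$ has exactly the weakness you flag: whether an $\epsilon$-interleaving between $M_0\otimes_{k_0}k$ and $N_0\otimes_{k_0}k$ descends (after an arbitrarily small slack $\delta$) to a $k_0$-interleaving between $M_0$ and $N_0$ is a nontrivial rationality-of-solutions question for the quadratic system $Q(M_0,N_0,\epsilon+\delta)$ of Section~\ref{ComputationSection}, and enlarging $\epsilon$ has no a priori reason to produce a $k_0$-point. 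So the proposal is not a proof, but it is an accurate map of where a proof would have to go and of which definitions would need adjusting along the way.
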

\subsubsection{Lifts of Interleavings to Functions}
The key step in the proof of Theorem~\ref{MainOptimality} is the proof of the following proposition.%
\begin{prop}[Existence of Geometric Lifts of Interleavings]\label{RealizationProp}Let $k$ be a prime field and let $M$ and $N$ be $\epsilon$-interleaved $n$-modules.  Then for any $i\geq 1$, there exists a CW-complex $X$ and continuous functions $\gamma^M,\gamma^N:X\to \R^n$ such that 
\[
M\cong H_i(\gamma^M),\quad N\cong H_i(\gamma^N), \quad d_\infty(\gamma^M,\gamma^N)=\epsilon.
\]
\end{prop}

The proposition tells us that interleavings on $n$-modules lift to interleavings on objects of $\CS$, in the sense of Remark~\ref{GeometricInterleavingRemark}.

Section~\ref{Sec:ProofGeoLifts} below is devoted to the proof of Proposition~\ref{RealizationProp}.  

\begin{proof}[Proof of Theorem~\ref{MainOptimality}] 
We now deduce Theorem~\ref{MainOptimality} from Proposition~\ref{RealizationProp}.  Let $M$ and $N$ be $n$-modules such that $d_I(M,N)=\epsilon$.  For any $\delta>0$, $M$ and $N$ are $(\epsilon+\delta)$-interleaved.  By Proposition~\ref{RealizationProp}, for $i\geq 1$ there exists a topological space $X$ and $\gamma^M:X\to \R^n$, $\gamma^N:X\to \R^n$ such that $M\cong H_i(\gamma^M)$, $N\cong H_i(\gamma^N)$, and $d_\infty(\gamma^M,\gamma^N)=\epsilon+\delta$.  Thus if $d$ is any $i$-stable metric on $[\obj \nmod]$, $d(M,N)\leq \epsilon+\delta.$  Since this holds for all $\delta>0$, we have $d(M,N)\leq \epsilon=d_I(M,N)$.   
\end{proof}

Note that Conjecture~\ref{Conj:ZeroOptimality} would follow from the following conjectural extension of Proposition~\ref{RealizationProp}. 

\begin{conjecture} Let $k$ be any field and let $M$ and $N$ be $\epsilon$-interleaved $n$-modules in $\im H_i\FS$.  Then for any $i\geq 0$, there exists a CW-complex $X$ and continuous functions $\gamma^M,\gamma^N:X\to \R^n$ such that 
\[M\cong H_i(\gamma^M), \quad N\cong H_i(\gamma^N), \quad d_\infty(\gamma^M,\gamma^N)\leq\epsilon.\]
\end{conjecture}

\subsection{Proof of Proposition~\ref{RealizationProp}. }\label{Sec:ProofGeoLifts}

\subsubsection{Part 1: Constructing the CW-complex}\label{ConstructingComplexSection}
It is clear that Proposition~\ref{RealizationProp} is true if $M$ and $N$ are both trivial $n$-modules, so we may assume without loss of generality that either $M$ or $N$ is not trivial.

Theorem~\ref{AlgebraicRealization} gives us $n$-graded sets $\W_1,\W_2$ and homogeneous sets $\Y_1,\Y_2 \subset \free[\W_1,\W_2]$ such that  $\Y_1\in \free[\W_1,\W_2(-\epsilon)]$, $\Y_2\in \free[\W_1(-\epsilon),\W_2]$, and
\begin{align*} 
M &\cong \langle\W_1,\W_2(-\epsilon)|\Y_1,\Y_2(-\epsilon)\rangle,\\ 
N&\cong \langle\W_1(-\epsilon),\W_2|\Y_1(-\epsilon),\Y_2\rangle. 
\end{align*}

Given such $\W_1,\W_2,\Y_1,\Y_2$, we now construct the CW-complex $X$ appearing in the statement of Proposition~\ref{RealizationProp}.  

Let $\W=\W_1 \amalg \W_2$ and $\Y=\Y_1 \cup \Y_2$.  Let $X'$ be the standard CW-complex structure on $\R$.  That is, for each $z\in \Z$, we take  $z$ to be a $0$-cell in $X'$ and we take the interval $(z,z+1)$ to be a $1$-cell in $X'$.  

Now fix $i\geq 1$.  For $Z$ a CW-complex, let $\cells(Z)$ the collection of cells of $Z$.  We define $X$ so that

\begin{enumerate*}
\item $X'$ is a subcomplex of $X$.
\item $X$ has an $i$-cell $e^i_w$ for each $w\in \W$. 
\item $X$ has an $(i+1)$-cell $e^{i+1}_y$ for each $y\in \Y$.
\item $\cells(X)=\cells(X')\amalg \{e^i_w\}_{w\in\W} \amalg \{e^{i+1}_y\}_{y\in \Y}.$
\end{enumerate*}

For $a=(a_1,\ldots,a_n)\in\R^n$, let \[\lfloor a \rfloor=\max\{z\in \Z \mid z\leq a_j\textup{ for }1\leq j\leq n\}.\]  For all $w\in \W$, let the attaching map of $e^i_w$ be the constant map to the $0$-cell $\lfloor \gr(w) \rfloor \in X'$. 

This defines the $i$-skeleton $X^i$ of $X$.  $X^i$ is thus a copy of the real line with a copy $S_w^i$ of the $i$-dimensional sphere attached for each $w\in \W$.  

Clearly, the map \[q:X^i \to \vee_{w\in \W}\, S_w^i\] which collapses $X'$ to a point is a homotopy equivalence, so the map \[q_*:  \pi_i(X^i)\to \pi_i(\vee_{w\in \W}\, S^i_w)\] is an isomorphism.  By \cite[Examples 4.26 and 1.21]{hatcher2002algebraic}, for $i>1$ (i=1), $\pi_i(\vee_{w\in \W}\, S^i_w)$ is free abelian (free) with generators the homotopy classes of the inclusions \[j_w:S_w^i\hookrightarrow \vee_{w\in \W}\, S^i_w.\]  The image of this set of generators under $q_*^{-1}$ is thus a generating set for $\pi_i(X^i)$.  We let $\Gen_w\in \pi_i(X^i)$ denote the generator $q_*^{-1}[j_w]$.  

To complete the construction of $X$, it remains only to specify the attaching map $\sigma_y:S^i \to X^i$ of each cell $e^{i+1}_y$.  Here we need to treat the cases $k=\Q$ and $k=\Z/p\Z$ separately.  

We first consider the case $k=\Q$.  Recall that we identify $\W$ with a set of homogeneous generators for $\free[\W]$, and that by definition, $\Y\subset \free[\W]$.  For each $y\in \Y$, there exists a unique choice of finite set $\W_y\subset\W$ and non-zero rational number $c'(y,w)$ for each $w\in \W_y$, such that $\gr(w)\leq \gr(y)$ and 
\begin{align*}
y=\sum_{w\in \W_y} c'(y,w) x^{\gr(y)-\gr(w)}w.
\end{align*}
Since $\W_y$ is finite, there exists $z\in \Z$, $z\ne 0$, such that for each $w\in \W_y$, $c'(y,w)z\in \Z$.  For each $w \in\W_y$, we let $c(w,y)=c'(w,y)z$.

Analogously, in the case that $k=\Z/p\Z$, for each $y\in \Y$ there exists a unique choice of finite set $\W_y\subset \W$ and non-zero integer $c(y,w)$ for each $w\in \W_y$, such that $\gr(w)\leq \gr(y)$ and 
\begin{align*}
y=\sum_{w\in \W_y} [c(y,w)] x^{\gr(y)-\gr(w)}w,
\end{align*}
where $[c(y,w)]\in \Z/p\Z$ denotes the equivalence class of $c(y,w)\, \mathrm{ mod }\ p$.

Having defined the integers $c(y,w)$ differently in the two cases, the rest of the proof of Proposition~\ref{RealizationProp} is the same for both cases.

We define the attaching map $\sigma_y: S^i \to X^i$ of the cell $e^{i+1}_y$ to be any map such that
\begin{enumerate*}
\item $\im \sigma_y \subset X'\, \bigcup_{w\in \W_y} {e^i_w},$
\item  regarding $X'$ as a copy of the real line, we have that for each $r\in  \im \sigma_y \cap X'$, \[r\leq \max_{w\in \W_y} \lfloor \gr(w) \rfloor,\]
\item $\sigma_y$ is in the unbased homotopy class containing the based homotopy class 
\[\prod_{w\in \W_y }\Gen_{w}^{c(y,w)} \in \pi_i(X^i).\]
\end{enumerate*}
It is easy to check that such a map $\sigma_y:S^i\to X^i$ exists.  

\begin{remark}
In the special case that the grades of elements of $\W$ are bounded below in the partial order on $\R^n$ (for example, when $\W$ is finite), it suffices to work with a simpler definition of $X$, where $X^i$ is taken to be a wedge sum of $i$-spheres.  However, in the general case, this simpler construction of $X$ does not suffice.
\end{remark}

\begin{example}\label{Ex:CWConstruction}
We illustrate the construction of the CW-complex $X$ above with a simple example.  Suppose that $n=2$, $i=1$, and 
\begin{align*}
\W&=\{(a,(1/2,1)),(b,(3,2)),(c,(5,5))\},\\
\Y&=\{x_1x_2a,c-x_1^2x_2^3b\}.
\end{align*}
We obtain $X^1$ by attaching three 1-spheres $S^1_a$, $S^1_b$, and $S^1_c$ to $X'$ at 0, 2, and 5, respectively.  We obtain $X$ from $X^1$ by attaching a disk along $S^1_a$ and a second disk along 
\[\ell\cdot (S^1_b)^{-1} \cdot \ell^{-1} \cdot S^1_c,\]
where $\ell:[0,1]\to X'=\R$ is the linear path from 2 to 5, and we interpret $S^1_b$, $S^1_c$ as closed paths in $X^1$ with endpoints in $X'$.  The resulting space $X$ is a copy of $X'$ with a disk attached to $0\in X'$ and a cylinder attached to $[2,5]\subset X'$, as in Figure 1.
\end{example}
\begin{figure}[hbt]
\label{Fig:CW_Construction}
\centering
\includegraphics[width=.85\textwidth]{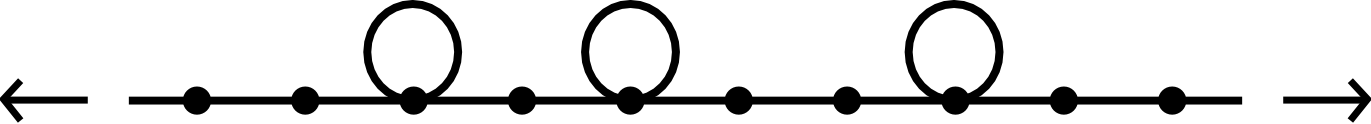}\\
\vskip30pt
\includegraphics[width=.85\textwidth]{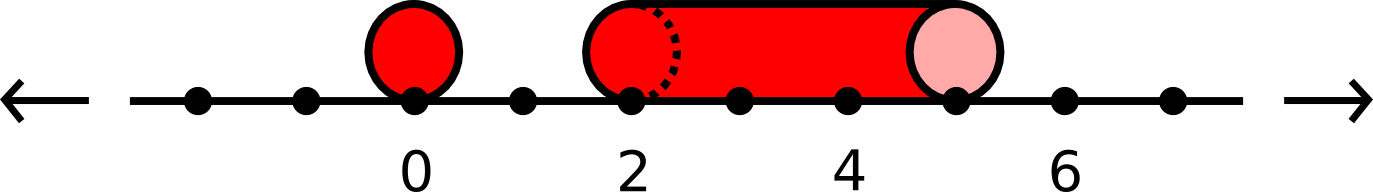}
{\caption{The CW complexes  $X^1\subset X$ constructed in \cref{Ex:CWConstruction}.} \label{Fig:induced-matching-close-eps}}
\end{figure}

For $T$ a CW-complex and $l\geq 0$, let $C_l(T)$ denote the $l^{\rm{th}}$ cellular chain vector space of $T$, and let \[\partial^T_{l}:C_l(T)\to C_{l-1}(T)\] denote the $l^{\rm{th}}$ cellular boundary map.  (See \cite{hatcher2002algebraic} for details on cellular homology.)  

Recall that the $l$-cells in $T$ form a basis for $C_l(T)$.  

\begin{lem}\label{Lem:DegreeLemma}
For $y\in \Y$, \[\partial^X_{i+1}(e^{i+1}_y)=\sum_{w\in \W_y} c(y,w) e^i_w,\]
where we interpret the equation mod $p$ if $k=\Z/p\Z$.  
\end{lem}

\begin{proof}
For $e$ an $i$-cell in $X$, let $q_e:X\to S^i$ denote the map which collapses the complement of $e$ to a point.  We first note that by the cellular boundary formula \cite{hatcher2002algebraic},
\[\partial^X_{i+1}(e^{i+1}_y)=\sum_{\text{$e$ an $i$-cell in $X$}} d_e\, e\]
where $d_e$ is the degree of the map $q_e\circ \sigma_y:S^i\to S^i$, and only finitely many of the coefficients $d_e$ are non-zero.  As above, we interpret the equation mod $p$ if $k=\Z/p\Z$.

If $i=1$ and $e$ is a $1$-cell in $X'$, then $d_e=0$.  To see this, note that $q_e$ factors through the map $X^1\to X'$ which, for each $w\in \W$, collapses the 1-cell $e^1_w$ onto its point of intersection with $X'$.  Since $X'$ is contractible, $q_e$ is nullhomotopic, so $q_e\circ \sigma_y$ is nullhomotopic as well.  Therefore $d_e=0$.

For $i\geq 1$, if $e$ is an $i$-cell in $X$ with $\im \sigma_y\cap e=\emptyset$, then again we have $d_e=0$.  Since $\sigma_y \subset X'\,  \bigcup_{w\in \W_y} {e^i_w},$ we thus have that
\[\partial^X_{i+1}(e^{i+1}_y)=\sum_{w\in \W_y} d_w e^i_w,\]
where we have written $d_{e^{i}_w}$ simply as $d_w$.  

It remains to check that for each $w\in \W_y$, $d_w=c(y,w)$.  Let $q_w=q_{e^i_w}$, and let \[\bar q_w:\vee_{v\in \W}\, S^i_v\to S^i\] be the map which collapses $\vee_{v\in \W-\{w\}}\, S^i_v$ to a point.
We have a commutative diagram:
\[
\xymatrixcolsep{5.5pc}
\xymatrix{
X^i\ar[r]^{q}\ar[dr]_{q_w} &\vee_{v\in \W}\, S^i_v\ar[d]^{\bar q_w}\\
& S^i\\
}
\]
$\sigma_y$ lies in the unbased homotopy class containing the homotopy class \[\prod_{w\in \W_y }\Gen_{w}^{c(y,w)}\in \pi_i(X^i),\]  so by the way we've defined the generators $\Gen_{w}$, 
we have that $q \circ \sigma_y\simeq j$, where $j=\prod_{w\in \W_y } j_w^{c(y,w)}$.  
By the commutativity of the above diagram, then, we have \[q_w\circ \sigma_y\simeq\bar q_w\circ j\simeq (\id_{S^i})^{c(y,w)}.\]  By \cite[Corollary 4.25]{hatcher2002algebraic},  $(\id_{S^i})^{c(y,w)}$ is a map of degree $c(y,w)$, so since the degree of a map is a homotopy invariant, we have that $d_w=c(y,w)$, as desired.
\end{proof}

\subsubsection{Part 2: Defining $\gamma^M$ and $\gamma^N$}
Having defined the CW-complex $X$, we next define $\gamma^M,\gamma^N:X\to \R^n$.  Let 
\[{\tilde X}=X' \ \amalg_{w\in \W} D^i_w \ \amalg_{y\in \Y} D^{i+1}_y,\]
where $D^i_w$ and $D^{i+1}_y$ denote copies of the i-dimensional and  (i+1)-dimensional closed unit disk, respectively.  
$X$ is the quotient of $\tilde X$ under the equivalence relation generated by the attaching maps of the $i$-cells and $(i+1)$-cells of $X-X'$.  Let $\rho:\tilde X  \to X$ denote the quotient map.  Any continuous function $\tilde X\to \R^n$ which is constant on fibers of $\rho$ descends to a continuous function $X\to \R^n$.
To define $\gamma^M,\gamma^N$, we define functions ${\tilde \gamma^M},{\tilde \gamma^N}:\tilde X\to \R^n$ which are constant on fibers of $\rho$.  We then take $\gamma^M,\gamma^N$ to be the respective induced functions on $X$.

We'll take both ${\tilde \gamma^M}$ and ${\tilde \gamma^N}$ to have the property that for each disk in $\tilde X-X'$, the restriction of the function to any {\it radial line segment} (i.e., a line segment from the origin of the disk to the boundary of the disk) is linear.  To specify ${\tilde \gamma^M}$ and ${\tilde \gamma^N}$, then, it is enough to specify the values of each function on $X'$ and on the origin $O_D$ of each disk $D$ in $\tilde X-X'$.  We define ${\tilde \gamma^M}$ and ${\tilde \gamma^N}$ as follows:
\begin{align*}
\tilde\gamma^M(t)&=\tilde\gamma^N(t)=\vec t\ \ \, \textup{ for }t\in X'=\R,\\
{\tilde \gamma^M}(O_{D_v})&=
\begin{cases}
\gr(v)                      &\textup{ for } v\in \W_1 \cup \Y_1\\
\gr(v)+\vec \epsilon&\textup{ for } v\in \W_2 \cup \Y_2,
\end{cases}\\
{\tilde \gamma^N}(O_{D_v})&=
\begin{cases}
\gr(v)+\vec \epsilon&\textup{ for } v\in \W_1 \cup \Y_1\\
\gr(v)                      &\textup{ for } v\in \W_2 \cup \Y_2.
\end{cases}
\end{align*}

\begin{lem}\label{FunctionDistanceLemma}
$d_\infty(\gamma^M,\gamma^N)=\epsilon$.
\end{lem}

\begin{proof} Assume that for a disk $D$ of $\tilde X-X'$, $\|{\tilde \gamma^M}(p)-{\tilde \gamma^N}(p)\|_\infty\leq \epsilon$ for all $p\in \partial D$, and that $\|{\tilde \gamma^M}(O_D)-{\tilde \gamma^N}(O_D)\|_\infty=\epsilon$.  We'll show that then $\|{\tilde \gamma^M}(p)-{\tilde \gamma^N}(p)\|_\infty\leq \epsilon$ for all $p\in D$.  Applying this result once gives that $d_\infty(\gamma^M\circ \iota,\gamma^N\circ \iota)=\epsilon$, where $\iota:X^i\hookrightarrow X$ is the inclusion.  Applying the result a second time establishes the lemma.     

To show that $\|{\tilde \gamma^M}(p)-{\tilde \gamma^N}(p)\|_\infty\leq \epsilon$ for any $p\in D$, write $p=tO_D+(1-t)b$ for some $b\in \partial D$  and $0\leq t\leq 1$.   Since the restrictions of ${\tilde \gamma^M}$ and ${\tilde \gamma^N}$ to any radial line segment from $O_D$ to $\partial D$ are linear, we have that 
\begin{align*}
{\tilde \gamma^M}(p)&=t{\tilde \gamma^M}(O_D)+(1-t){\tilde \gamma^M}(b),\\
{\tilde \gamma^N}(p)&=t{\tilde \gamma^N}(O_D)+(1-t){\tilde\gamma^N}(b).  
\end{align*}
Thus
\begin{align*}
\|{\tilde \gamma^M}(p)-{\tilde \gamma^N}(p)\|_\infty &\leq t\|{\tilde \gamma^M}(O_D)-{\tilde \gamma^N}(O_D)\|_\infty+(1-t)\|{\tilde \gamma^M}(b)-{\tilde \gamma^N}(b)\|_\infty\\
&\leq t\epsilon+(1-t)\epsilon=\epsilon
\end{align*}
as needed.\end{proof}

\subsubsection{Part 3: Verifying that $M\cong H_i(\gamma^M)$ and $N\cong H_i(\gamma^N)$}\label{LastPartOfProof}
We'll now show that $M\cong H_i(\gamma^M)$; the argument that $N\cong H_i(\gamma^N)$ is essentially the same.  

\begin{lem}\label{Lem:keyFunctionProperty}
For any disk $D$ in $\tilde X-X'$ and $p\in D$, \[\tilde \gamma^M(p)\leq \tilde \gamma^M(O_D).\]  
\end{lem}

\begin{proof}
Since we have defined $\tilde \gamma^M$ to be linear along the radial line segments of disks in $\tilde X-X'$, it suffices to prove the result for $p\in \partial D$.

For $D=D_w$ an $i$-dimensional disk, \[\tilde \gamma^M(p)=\vv{\lfloor \gr(w) \rfloor} \leq \gr(w)\leq\tilde \gamma^M(O_{D_w}),\] so the result holds.  

For $D=D_y$ an $(i+1)$-dimensional disk, let $\longsup\in \R^n$ be given by
\[\longsup=\sup\left(\{\gr(w)\mid w\in \W_y\cap \W_1\}\cup \{\gr(w)+\vec\epsilon\mid w\in \W_y\cap \W_2\}\right).\]
It follows from properties 1 and 2 in our definition of $\sigma_y$ that $\tilde\gamma^M(p)\leq \longsup$.

To finish the proof of the lemma, we show that $\longsup\leq \tilde \gamma^M(O_{D_y})$.  If $y\in \Y_1$, then since $y\in \free[\W_1,\W_2(-\epsilon)]$, we have that $\gr(w)\leq \gr(y)$ for all $w\in \W_y\cap \W_1$, and $\gr(w)+\vec \epsilon\leq \gr(y)$ for all $w\in \W_y\cap \W_2$.  Thus \[\longsup\leq \gr(y)=\tilde \gamma^M(O_{D_y}).\]  

If $y\in \Y_2$, then since $y\in \free[\W_1(-\epsilon),\W_2]$, we have that \[\gr(w)\leq\gr(w)+\vec \epsilon\leq \gr(y)\leq\gr(y)+\vec\epsilon\] for all $w\in \W_1\cap \W_y$, and $\gr(w)+\vec\epsilon\leq \gr(y)+\vec\epsilon$ for all $w\in \W_2\cap \W_y$.  Thus \[\longsup\leq \gr(y)+\vec\epsilon=\tilde \gamma^M(O_{D_y}).\]  Since either $y\in \Y_1$ or $y\in \Y_2$, we therefore have that $\longsup\leq\gamma^M(O_{D_y})$, as desired.
\end{proof}

For $a\in \R^n$, let $\F_a$ denote the subcomplex of $X$ containing only those cells $e$ of $X$ entirely contained in $\FS(\gamma^M)_a$.  

\begin{lem}\label{Lem:Def_Retract}
${\F}_a$ is a deformation retract of $\FS(\gamma^M)_a$.  
\end{lem}

\begin{proof}
There exists a pair of deformation retractions \[\FS(\gamma^M)_a\to \F_a\cup (\FS(\gamma^M)_a\cap X^i)\to \F_a;\] this follows easily from \cref{Lem:keyFunctionProperty} and the fact that $\tilde \gamma^M$ is linear along the radial line segments of disks in $\tilde X-X'$.  Clearly, the composition of the two maps is a deformation retraction $\FS(\gamma^M)_a\to \F_a$.
\end{proof}

$\{\F_a\}_{a\in \R^n}$ defines an $n$-dimensional cellular filtration ${\F}$, and the inclusions ${\F}_a \hookrightarrow \FS(\gamma^M)_a$ define a morphism $\chi:\F\to \FS(\gamma^M)$ of $n$-dimensional filtrations.  By \cref{Lem:Def_Retract}, $H_i(\chi)_a:H_i {\F}_a\to H_i (\gamma^M)_a$ is an isomorphism for all $a\in \R$, so $H_i(\chi): H_i{\F}\to H_i(\gamma^M)$ is an isomorphism of $n$-modules.  Thus, to prove that $M\cong H_i(\gamma^M)$, it's enough to check that $M\cong H_i{\F}$.  This is a straightforward application of cellular homology, as we now explain.

Let $M'$ be the $n$-module with $M'_a=\free[\W_1,\W_2(-\epsilon)]_a/\langle \Y_1,\Y_2(\epsilon)\rangle_a$ and with $\varphi_{M'}(a,a')$ the map induced by the inclusion $\free[\W_1,\W_2(-\epsilon)]_a\hookrightarrow \free[\W_1,\W_2(-\epsilon)]_{a'}$.  $M'$ is canonically isomorphic to  $\free[\W_1,\W_2(-\epsilon)]/\langle \Y_1,\Y_2(-\epsilon)\rangle$, so since $M\cong\langle \W_1,\W_2(-\epsilon)|\Y_1,\Y_2(-\epsilon)\rangle$, $M$ is isomorphic to $M'$.  

To show that $M\cong H_i{\F}$, then, it's enough to define isomorphisms $\alpha_a:H_i{\F_a}\to M'_a$ for all $a\in \R^n$, such that the following diagram commutes whenever $a\leq b$: 
\begin{equation}\label{eq:alpha_naturality}
\begin{aligned}
\xymatrixcolsep{3.0pc}
\xymatrix{
H_i{\F}_a\ar[r]^{\varphi_{H_i{\F}}(a,b)}\ar[d]_{\alpha_a}^{\cong} &H_i{\F}_b\ar[d]^{\alpha_b}_{\cong}\\
M'_a\ar[r]_{\varphi_M(a,b)} &M'_b
}
\end{aligned}
\end{equation}

Define $\W_a\subset \W$ and $\Y_a\subset \Y$ by 
\begin{align*}
\W_a&=\{w\in \W_1\mid \gr(w)\leq a\} \cup \{w\in \W_2\mid \gr(w)+\vec\epsilon\leq a\}.\\
\Y_a&=\{y\in \Y_1\mid \gr(y)\leq a\} \cup \{y\in \Y_2\mid \gr(y)+\vec\epsilon\leq a\}.
\end{align*}

Let $\E=\{e^i_w\mid w\in \W_a\}.$  It follows from \cref{Lem:keyFunctionProperty} that $\E$ is exactly the set of $i$-cells in $\F_a$ which do not lie in $X'$.  The $i$-cells of $\F_a$ form a basis for $C_i(\F_a)$, so $\E$ is a linearly independent set in $C_i(\F_a)$.  In fact, it's easy to check that $\E$ is a basis for $\ker\partial^a_i$, where we have written $\partial^{\F_a}_i$ simply as $\partial^a_i$.  

Let $\V= \{x^{a-\gr(w)}w\mid w\in \W_a\}.$  $\V$ is a basis for $\free[\W_1,\W_2(-\epsilon)]_a$.  We have a bijection $\E\to \V$ sending $e^i_w$ to $x^{a-\gr(w)}w$.  Since $\E$ and $\V$ are bases for $\ker\partial^a_i$ and $\free[\W_1,\W_2(-\epsilon)]_a$, respectively, this bijection extends linearly to an isomorphism 

\[\tilde\alpha_a:\ker\partial^a_i\to \free[\W_1,\W_2(-\epsilon)]_a.\]  

We next show that $\tilde\alpha_a(\im \partial^a_{i+1})=\langle \Y_1,\Y_2(-\epsilon)\rangle_a$.  By \cref{Lem:keyFunctionProperty}, $\{e^{i+1}_y\mid y\in \Y_a\}$ is the set of $(i+1)$-cells in $\F_a$, hence is a basis for $C_{i+1}(\F_a)$.  Thus, by \cref{Lem:DegreeLemma},
\[\im \partial^a_{i+1}=\newspan\bigg\{\sum_{w\in \W_y} c(y,w) e^i_w \mid y\in \Y_a\bigg\}.\]
On the other hand, we have that \[\langle \Y_1,\Y_2(-\epsilon) \rangle_a=\newspan \bigg\{\sum_{w\in \W_y} c(y,w)x^{a-\gr(w)}w \mid y\in \Y_a\bigg\}.\]
It follows that $\tilde\alpha_a(\im \partial^a_i)=\langle \Y_1,\Y_2(-\epsilon)\rangle_a$, as desired.  

Hence, since the singular and cellular homology of CW-complexes are naturally isomorphic, $\tilde\alpha_a$ descends to an isomorphism $\alpha_a: H_i{\F}_a\to M'_a$.  It remains to check that for this definition of the maps $\alpha_a$, the diagram~(\ref{eq:alpha_naturality}) above commutes. 
For $a\leq b\in \R^n$, we have a commutative diagram of the following form, where the horizontal arrows denote the inclusions:
\[
\xymatrixcolsep{2.5pc}
\xymatrix{
\ker\partial^a_i\ar[d]_{\tilde \alpha_a}\ar@{^{(}->}[r] & \ker\partial^b_i\ar[d]^{\tilde\alpha_b}\\
\free[\W_1,\W_2(-\epsilon)]_a\ar@{^{(}->}[r]  & \free[\W_1,\W_2(-\epsilon)]_b
}
\]
It now follows by taking quotients that the diagram~(\ref{eq:alpha_naturality}) commutes.  This completes the proof of Proposition~\ref{RealizationProp}.

\subsection{Universality of the Bottleneck Distance}\label{Sec:UniversalityOfBottleneck}
We now present a universality result for $d_B$ analogous to our universality result Theorem~\ref{MainOptimality} for $d_I$.  It is most convenient to formulate this result in terms of reduced homology.  We let $\tilde H_i:\nfilt\to \nmod$ denote the $i^{th}$ reduced persistent homology functor, defined in the obvious way.

Let $\PFmod$ denote the full subcategory of $\onemod$ whose objects are \pfd persistence modules.  For $i\geq 0$, we say a pseudometric on $[\obj \PFmod]$ is {\it $i$-stable} if for any topological spaces $X,Y$ and functions 
\[\gamma^X:X\to \R,\quad \gamma^Y:Y\to \R\] 
such that $\tilde H_i(\gamma^X)$ and $\tilde H_i(\gamma^Y)$ are \pfd, we have
\[d(\tilde H_i(\gamma^X),\tilde H_i (\gamma^Y))\leq d_\infty(\gamma^X,\gamma^Y).\]

For $i\geq 0$, we say a pseudometric $d$ on $[\obj \PFmod]$ is {\it $i$-universal} if $d$ is $i$-stable and for any other pseudometric $d'$ on $[\obj \PFmod]$, $d'\leq d$.

\begin{thm}\label{PFOptimality}
For any field $k$ and $i\geq 0$, $d_B$ is $i$-universal.
\end{thm}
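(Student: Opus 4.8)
The plan is to mirror, over an arbitrary field, the deduction of Theorem~\ref{MainOptimality} from Proposition~\ref{RealizationProp}, with the Isometry Theorem~\ref{InterleavingEqualsBottleneck} playing the role that the hypothesis $k=\Q$ or $k=\Z/p\Z$ played there. First, $i$-stability of $d_B$ on $\obj^*(\PFmod)$ is exactly Theorem~\ref{OrdinarySublevelsetStability}, so the work is the minimality clause. Given pointwise finite $1$-modules $M,N$ with $d_B(M,N)=\epsilon<\infty$ (the case $\epsilon=\infty$ being vacuous), the Isometry Theorem gives $d_I(M,N)=\epsilon$, so $M$ and $N$ are $(\epsilon+\delta)$-interleaved for every $\delta>0$; it therefore suffices to prove a geometric realization statement: if $M,N$ are pointwise finite and $\epsilon$-interleaved then there exist a CW-complex $X$ and continuous $\gamma_M,\gamma_N:X\to\R$ with $H_i\circ F^S(X,\gamma_M)\cong M$, $H_i\circ F^S(X,\gamma_N)\cong N$, and $\|\gamma_M-\gamma_N\|_\infty\leq\epsilon$. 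Granting this, $i$-stability of any $i$-stable $d'$ applied to $(X,\gamma_M),(X,\gamma_N)$ gives $d'(M,N)\leq d^S((X,\gamma_M),(X,\gamma_N))\leq\epsilon+\delta$, and $\delta\to 0$ yields $d'\leq d_B$. For $i=0$ one must be more careful, since $H_0\circ F^S$ is neither essentially surjective nor closed under passage to summands; exactly as in the $\nmod$ case discussed in Remark~\ref{Rmk:ZeroOptimalityRemark}, the minimality clause of $0$-optimality is to be read as ``$d'(M,N)\leq d_B(M,N)$ for all $M,N\in\im(H_0\circ F^S)$'', and there the realization statement is carried out with $1$-dimensional CW-complexes and a direct computation of $H_0$ of sublevelsets, recovering in particular \cite[Theorem~32]{d2010natural}. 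I concentrate below on $i\geq 1$, where the argument is cleanest.

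For $i\geq 1$ I would prove the realization statement by rerunning the construction of Sections~\ref{ConstructingComplexSection}--\ref{LastPartOfProof} essentially verbatim, the only new point being that the hypothesis $k=\Q$ or $k=\Z/p\Z$ can be dropped, because for pointwise finite modules the interleaving can be taken to be combinatorial. Concretely: by the Crawley--Boevey Structure Theorem~\ref{Thm:WCBStructureTheorem} we may write $M$ and $N$ as direct sums of interval $1$-modules, and, following the proof of the Isometry Theorem, the $\epsilon$-interleaving of $M$ with $N$ may be chosen to be a direct sum of the canonical interleaving morphisms between matched interval summands (as in Lemma~\ref{CyclicConverse}) together with zero morphisms on the leftover summands of small width. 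Lifting such an interleaving along free covers of $M$ and $N$ built from the interval-summand decomposition, the lifts $\tilde f,\tilde g$ fed into the Characterization Theorem~\ref{AlgebraicRealization} are entrywise monomials with coefficient $1$ or $0$, as are the defining relations; hence in the decomposition~(\ref{EquationForY}) every coefficient $a'_{wy}$ lies in $\{0,\pm 1\}$, so one may take $z=1$ in Section~\ref{ConstructingComplexSection}, the attaching maps of Lemma~\ref{AttachingMapsLem} exist over any field $k$, and every subsequent step, including Lemma~\ref{FunctionDistanceLemma} and the homology computation of Section~\ref{LastPartOfProof}, goes through unchanged.

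Two bookkeeping points deserve checking. First, the graded sets $\W$ and $\Y$ need not be finite, since a pointwise finite $1$-module need not be finitely presented; but the construction only ever uses that each $\W_y$ is finite---here $|\W_y|\leq 2$---so this causes no difficulty, provided one confirms that assembling the block morphisms over a possibly infinite family of summands still yields a single honest pair of lifts $\tilde f,\tilde g$ to free covers of $M$ and $N$. Second, $H_i\circ F^S(X,\gamma_M)$ is pointwise finite because the argument of Section~\ref{LastPartOfProof} produces an isomorphism of $n$-modules $H_i\circ F^S(X,\gamma_M)\cong M$ and $M$ is pointwise finite; hence the hypothesis needed to invoke $i$-stability of $d'$ is automatically met.

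The main obstacle, and the step deserving the most care, is making the combinatorial claim above precise: one must write down the canonical $\epsilon$-interleaving between two interval $1$-modules whose defining intervals are at bottleneck distance at most $\epsilon$, verify that its lift along the natural free covers is entrywise a monomial with unit coefficient, and confirm that these blocks assemble correctly when there are infinitely many summands. Once this is in hand, everything downstream is a rerun of the already-established Proposition~\ref{RealizationProp}; the content specific to this theorem is confined to this combinatorial observation and to invoking the Isometry Theorem in place of the passage from $d_I$ to geometric lifts that required $k=\Q$ or $k=\Z/p\Z$.
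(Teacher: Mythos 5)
Your argument is sound and lands in the same place, but it takes a longer route than the paper has in mind. The paper proves Theorem~\ref{PFOptimality} by a one-line appeal to Proposition~\ref{PFRealizationProp}, whose proof it leaves to the reader as ``an easy constructive proof, similar on a high level to our proof of Proposition~\ref{RealizationProp}, \ldots from the definition of $d_B$ and the Crawley-Boevey Structure Theorem''; the intended construction builds the CW-complex \emph{directly} from the interval decompositions of $M$ and $N$ and an approximate bottleneck matching (one thick sphere per matched pair of intervals, with $\gamma_M,\gamma_N$ read off the pair), without ever invoking the Isometry Theorem or the Characterization Theorem. You instead pass $d_B\to d_I$ via Theorem~\ref{InterleavingEqualsBottleneck}, choose a decomposition-compatible interleaving, lift along free covers, feed the lifts through Theorem~\ref{AlgebraicRealization}, and rerun Sections~\ref{ConstructingComplexSection}--\ref{LastPartOfProof}. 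The crucial new observation -- that the decomposition-compatible interleaving yields lifts which are entrywise monomials with coefficient $0$ or $\pm1$, so every $a'_{wy}$ lies in $\{0,\pm1\}$, $z=1$ suffices, and Lemma~\ref{AttachingMapsLem} then produces field-independent attaching maps -- is exactly the right identification of why the hypothesis $k=\Q$ or $\Z/p\Z$ in Proposition~\ref{RealizationProp} can be dropped here, and that is the genuine mathematical content of the step. What your route costs is extra machinery (Isometry plus Characterization) for a result the paper derives from the structure theorem and the definition of $d_B$ alone; what it buys is a uniform re-use of the already-established cell construction.

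Two caveats, comparable in weight to the gaps the paper itself leaves to the reader. Your intermediate realization claim is phrased for an arbitrary $\epsilon$-interleaving but is proved only for the decomposition-compatible one, and such an interleaving need not exist at radius exactly $\epsilon$ even when $M,N$ are $\epsilon$-interleaved (the bottleneck infimum is not attained in general for pointwise finite modules); since you only apply the claim at radius $\epsilon+\delta$, where an approximate matching \emph{is} available, the application is fine, but the claim should be stated as Proposition~\ref{PFRealizationProp} is, with the $\delta$ built in. Also, the cell construction of Section~\ref{ConstructingComplexSection} does not literally parse at $i=0$ (there is no degree for a map $S^0\to S^0$ in the sense used), so the $i=0$ case genuinely requires the separate $1$-dimensional construction you allude to but do not carry out; the paper is equally terse here, so this is a shared gap rather than a defect specific to your proposal.
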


\begin{proof}
Using Proposition~\ref{PFRealizationProp} below in place of Proposition~\ref{RealizationProp}, the proof of Theorem~\ref{MainOptimality} carries over to give the result.
\end{proof}

Note that whereas Theorem~\ref{MainOptimality} holds for prime fields $k$ and $i\geq 1$, Theorem~\ref{PFOptimality} holds for arbitrary fields $k$ and $i\geq 0$.

\begin{prop}[Existence of Geometric Lifts of Interleavings for \pfd $1$-Modules]\label{PFRealizationProp} Let $k$ be any field, and let $M$ and $N$ be \pfd $1$-modules with $d_B(M,N)=\epsilon$.  Then for any $i\geq 0$ and $\delta>0$, there exists a CW-complex $X$ and continuous functions $\gamma^M,\gamma^N:X\to \R$ such that 
\[M\cong \tilde H_i(\gamma^M), \quad N\cong \tilde H_i(\gamma^N),\quad d_\infty(\gamma^M,\gamma^N)\leq\epsilon+\delta.\]\end{prop}

\begin{proof}  
An easy constructive proof, similar on a high level to our proof of Proposition~\ref{RealizationProp}, follows from the definition of $d_B$ and the structure theorem for 1-D persistence modules,~\cref{Thm:WCBStructureTheorem}.  We leave the details to the reader.   
\end{proof}

\begin{remark}\label{FrosiniRemark} In the special case that $i=0$, our Theorem~\ref{PFOptimality} generalizes the universality result of \cite{d2010natural}.
\end{remark}

  \section{The Closure Theorem}\label{Sec:ClosureTheorem}
This section is devoted to the proof our fourth and last main result:

\begin{thm}[The Closure Theorem]\label{InterleavingThm}If $M$ and $N$ are finitely presented $n$-modules and $d_I(M,N)=\epsilon$, then $M$ and $N$ are $\epsilon$-interleaved. \end{thm}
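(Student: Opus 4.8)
\emph{Plan.} The plan is to show that, for finitely presented $M$ and $N$, the condition ``$M$ and $N$ are $\epsilon'$-interleaved'' is equivalent to the solvability over $k$ of a finite system of polynomial equations of degree at most two, whose very formulation does not change as $\epsilon'$ ranges over a small right-neighborhood of $\epsilon$; specializing this system to $\epsilon$ then yields the theorem. To begin, I would fix finite presentations, obtaining free covers $\rho_M\colon F_M\to M$ and $\rho_N\colon F_N\to N$ with $F_M,F_N$ finitely generated free $n$-modules. The first step is to record that, for each $\epsilon'\ge 0$, $M$ and $N$ are $\epsilon'$-interleaved if and only if there exist morphisms $\tilde f\colon F_M\to F_N(\epsilon')$, $\tilde g\colon F_N\to F_M(\epsilon')$ with $\tilde f(R_M)\subseteq\langle R_N(\epsilon')\rangle$ and $\tilde g(R_N)\subseteq\langle R_M(\epsilon')\rangle$ (so that $\tilde f,\tilde g$ descend to morphisms $f\colon M\to N(\epsilon')$, $g\colon N\to M(\epsilon')$), such that $(\tilde g(\epsilon')\circ\tilde f)(u)-x^{\vec{2\epsilon'}}u\in\langle R_M(2\epsilon')\rangle$ for each generator $u$ of $F_M$, and symmetrically with $M,N$ interchanged. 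The ``if'' direction is the third isomorphism theorem argument already used in the proof of the Characterization Theorem; the ``only if'' direction uses Lemma~\ref{ExistenceAndHomotopyUniquenessOfLifts}, since any lift of an $\epsilon'$-interleaving pair automatically satisfies these relations.

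The second step is to parametrize this data. Because $P_n$ is the monoid ring on $([0,\infty)^n,+)$, every homogeneous component of a free $n$-module is at most one-dimensional, so each graded piece of $F_M$, $F_N$, and their shifts is finite dimensional with a canonical monomial basis. Hence a morphism $\tilde f$ as above is precisely a finite tuple of scalars in $k$ — the coordinates, in these monomial bases, of the images of the generators of $F_M$ — a given one of which is forced to vanish unless an inequality of the shape $gr(u)+\vec{\epsilon'}\ge gr(v)$ holds; similarly for $\tilde g$. Expanding $\tilde f(R_M)$, the composites $\tilde g(\epsilon')\circ\tilde f$ and $\tilde f(\epsilon')\circ\tilde g$, the transition morphisms, and the submodules $\langle R_N(\epsilon')\rangle$ and $\langle R_M(2\epsilon')\rangle$ in the relevant grades, each constraint above becomes a polynomial equation of degree at most two in these scalars. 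Crucially, the entire formulation of this system — which scalars occur, the combinatorial shape of each equation, and which monomials span each relevant submodule in each relevant grade — is controlled by finitely many threshold inequalities of the form $\epsilon'\ge t$, with each $t$ an explicit constant built from the (finitely many) grades of the generators and relators of $M$ and $N$. Isolating this equivalence is the content of the lemma (Lemma~\ref{SmallerInterleavingLemma}) that is reused in Section~\ref{ComputationSection}.

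The third step finishes the argument. Choose $\eta>0$ small enough that no threshold constant $t$ lies in $(\epsilon,\epsilon+\eta]$. Then for every $\epsilon'\in\{\epsilon\}\cup(\epsilon,\epsilon+\eta]$ each threshold inequality has the same truth value at $\epsilon'$ as at $\epsilon$, so the system of polynomial equations characterizing $\epsilon'$-interleavings, viewed as a system over $k$ in one fixed finite tuple of unknowns, is literally the same system for all such $\epsilon'$. Since $d_I(M,N)=\epsilon$, the definition of $d_I$ together with Remark~\ref{EasyInterleavingRemark} gives that $M$ and $N$ are $\epsilon'$-interleaved for some $\epsilon'\in(\epsilon,\epsilon+\eta]$, so that common system has a solution over $k$; the same solution exhibits an $\epsilon$-interleaving of $M$ and $N$, as desired.

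The main obstacle I anticipate is the bookkeeping in the second step: one must verify that, within the chosen $\epsilon'$-window, the structure constants produced by composing the monomial-coefficient matrices of $\tilde f$ and $\tilde g$ are genuinely independent of $\epsilon'$, and that the coordinate descriptions of $\langle R_N(\epsilon')\rangle$ and $\langle R_M(2\epsilon')\rangle$ in each fixed grade are likewise independent of $\epsilon'$. These verifications are routine once one exploits that the graded pieces of $P_n$ are at most one-dimensional, but they demand careful attention to the grading-shift conventions of Section~\ref{ShiftsOfModules} (in particular the compatibility $\langle Q\rangle(\epsilon')=\langle Q(\epsilon')\rangle$ for homogeneous $Q$, and the identification $\ker(\rho_N(\epsilon'))=\langle R_N(\epsilon')\rangle$).
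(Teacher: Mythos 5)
Your proposal is correct, but it takes a genuinely different route from the paper's. The paper proves the Closure Theorem directly, without any appeal to the polynomial-system characterization: it establishes (Lemma~\ref{FirstIsomorphismLemma} and its consequences) that for a finitely presented module $M$ the transition map $\varphi_M(a,b)$ is an isomorphism whenever $(a_i,b_i]$ avoids the finite set $U_M^i$ of coordinate grades of a minimal presentation; it then picks $\delta>0$ so that the relevant transition maps of $M$ and $N$ in the ranges $(z+\epsilon, z+\epsilon+\delta]$ and $(z+2\epsilon, z+2\epsilon+2\delta]$ (for $z\in U_M\cup U_N$) are isomorphisms; and in Lemma~\ref{SmallerInterleavingLemma} it manufactures $\epsilon$-interleaving morphisms $\tilde f,\tilde g$ directly from given $(\epsilon+\delta)$-interleaving morphisms $f,g$ by pre- and post-composing with inverses of these transition maps and with the floor maps $fl_M,fl_N$. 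That is a hands-on diagram chase, not a solvability-of-equations argument. Your route instead reinvents the free-cover/matrix machinery that the paper only develops afterwards in Section~\ref{ComputationSection} (Lemma~\ref{LiftsOfInterleavingsLemma}, Lemma~\ref{MatrixLemma}, Theorem~\ref{InterleavingAndQuadratics}), and then adds the observation — which is correct — that all the $\epsilon'$-dependence of the resulting system is through finitely many threshold conditions of the form $\epsilon'\geq t$, so the system is unchanged on a half-open window $\{\epsilon\}\cup(\epsilon,\epsilon+\eta]$; combined with $d_I(M,N)=\epsilon$ and monotonicity of interleavings this yields the result. Both arguments are sound; yours is more conceptual and dovetails nicely with the computational section, while the paper's is more elementary and self-contained.

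Two small corrections to your write-up. First, you misattribute Lemma~\ref{SmallerInterleavingLemma}: it is not the lemma that ``the formulation of the system is controlled by finitely many threshold inequalities'' — it is the transition-map improvement lemma described above, and it does not mention polynomial systems at all. Your argument does not actually use it, so this is a labeling error rather than a gap. Second, your claim that ``every homogeneous component of a free $n$-module is at most one-dimensional'' is false as stated (already $\langle G\rangle_a$ has dimension $|\{y\in G: gr(y)\leq a\}|$, which can exceed one); what you actually need — and what is true — is that each $P_n(-gr(y))$ has at most one-dimensional graded pieces, so that a morphism between finitely generated free modules is determined by a finite matrix of scalars subject to vanishing constraints governed by grade comparisons. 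Neither issue affects the validity of the overall argument.
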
    

\begin{cor}\label{MetricCorollary} $d_I$ restricts to a metric on isomorphism classes of finitely presented $n$-modules. \end{cor}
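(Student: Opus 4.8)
The plan is to deduce this directly from the Closure Theorem, Theorem~\ref{InterleavingThm}, together with the trivial observation that a $0$-interleaving of $n$-modules is nothing but an isomorphism. Since $d_I$ is already known to be a pseudometric on $\obj^*(\nmod)$, it restricts to a pseudometric on the isomorphism classes of finitely presented $n$-modules, and all that remains is to check that this restricted pseudometric separates points: if $M$ and $N$ are finitely presented with $d_I(M,N)=0$, then $M\cong N$.

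So I would argue as follows. Suppose $M$ and $N$ are finitely presented with $d_I(M,N)=0$. By Theorem~\ref{InterleavingThm}, $M$ and $N$ are $0$-interleaved, so there exist morphisms $f\colon M\to N(0)=N$ and $g\colon N\to M(0)=M$ with $g\circ f=\varphi_M(0)$ and $f\circ g=\varphi_N(0)$. Now $\varphi_M(0)$ is, by definition, the morphism whose restriction to each homogeneous summand $M_a$ is the transition map $\varphi_M(a,a)$, which is $\id_{M_a}$; hence $\varphi_M(0)=\id_M$, and likewise $\varphi_N(0)=\id_N$. Therefore $g\circ f=\id_M$ and $f\circ g=\id_N$, so $f$ is an isomorphism with inverse $g$, giving $[M]=[N]$ in $\obj^*(\nmod)$. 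Combined with the fact that $d_I$ is a pseudometric, this shows $d_I$ restricts to a metric on isomorphism classes of finitely presented $n$-modules.

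There is essentially no obstacle here once the Closure Theorem is in hand; the only point requiring a moment's care is unwinding the definition of $\epsilon$-interleaving at $\epsilon=0$ to see that it coincides with the existence of mutually inverse morphisms. (One could also phrase the whole argument through the closedness reformulation of Theorem~\ref{InterleavingThm}: $\{\epsilon\mid M\text{ and }N\text{ are }\epsilon\text{-interleaved}\}$ is closed in $\R$, so an infimum of $0$ is attained, and attainment at $0$ means $M\cong N$.)
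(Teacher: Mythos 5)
Your argument is correct and is exactly the intended one: the paper presents Corollary~\ref{MetricCorollary} as an immediate consequence of the Closure Theorem with no further elaboration, and your unwinding of the case $\epsilon=0$ (that $\varphi_M(0)=\id_M$, so a $0$-interleaving is precisely a pair of mutually inverse morphisms) is the small but necessary observation that makes the deduction complete.
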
 

\begin{remark} In the special case $n=1$, the closure theorem follows easily from the Isometry Theorem \ref{Thm:Isometry} and the observation that a finitely presented 1-D persistence module has a finite barcode with all intervals of the form $[s,t)$, $s<t\in \R\cup \{\infty\}$.  
\end{remark}

We prepare for the proof of the closure theorem with a few definitions and lemmas.

For $M$ any finitely presented $n$-module, let $U_M\subset \R^n$ be the set of grades of the generators and relations in some fixed, arbitrarily chosen, finite presentation for $M$; let $U_M^i\subset \R$ be the set of $i^{th}$ coordinates of the elements of $U_M$; and let $\bar U_M^i=U_M^i\cup\{ -\infty \}.$  

The proof of the following result is straightforward:
\begin{lem}\label{FirstIsomorphismLemma} For any $a\leq b\in \R^n$ such that $(a_i,b_i]\cap U_M^i=\emptyset$ for all $i$, $\varphi_M(a,b)$ is an isomorphism. \end{lem}


Let \[\fl_M:\R^n\to \Pi_{i=1}^n {\bar U_M^i}\] be defined by $\fl_M(a_1,\ldots,a_n)=(a'_1,\ldots,a'_n)$, where $a'_i$ is the largest element of $U_M^i$ such that $a'_i\leq a_i$, if such an element exists, and $a'_i=-\infty$ otherwise.

\begin{lem}\label{SecondConsequence}For any $a\in \R^n$ with $\fl_M(a)\in \R^n$, $\varphi_M(\fl_M(a),a)$ is an isomorphism. \end{lem}

\begin{proof} This is an immediate consequence of Lemma~\ref{FirstIsomorphismLemma}. \end{proof}   

\begin{lem}\label{FirstConsequence}For any finitely presented $n$-module $M$ and $a\in \R^n$, there exists $t\in (0,\infty)$ such that $\varphi_M(a,a+\vec s)$ is an isomorphism for all $0\leq s\leq t$. \end{lem}  

\begin{proof} This too is an immediate consequence of Lemma~\ref{FirstIsomorphismLemma}. \end{proof}

For the remainder of this section, we will write $a+\vec t$ simply as $a+t$ for any $a\in \R^n$ and $t\in \R$. 

\begin{proof}[Proof of \cref{InterleavingThm}]
Let $M$ and $N$ be finitely presented $n$-modules with $d_I(M,N)=\epsilon$.  By Lemma~\ref{FirstConsequence} and the finiteness of $U_M$ and $U_N$, there exists $\delta>0$ such that for all $\z\in U_M$, $\varphi_N(\z+\epsilon ,\z+\epsilon+\delta)$ and $\varphi_M(\z+2\epsilon ,\z+2\epsilon+2\delta)$ are isomorphisms, and for all $\z\in U_N$, $\varphi_M(\z+\epsilon,\z+\epsilon+\delta)$ and $\varphi_N(\z+2\epsilon,\z+2\epsilon+2\delta)$ are isomorphisms.     

By Remark~\ref{EasyInterleavingRemark}, since $d_I(M,N)=\epsilon$, $M$ and $N$ are $(\epsilon+\delta)$-interleaved.  

Theorem~\ref{InterleavingThm} then follows from the following lemma.
\end{proof}

\begin{lem}\label{SmallerInterleavingLemma}
Let $M$ and $N$ be finitely presented $n$-modules and suppose there exist $\epsilon\geq 0$ and $\delta>0$ such that
\begin{enumerate*}
\item $M$ and $N$ are $(\epsilon+\delta)$-interleaved,
\item for all $\z\in U_M$, $\varphi_N(\z+\epsilon,\z+\epsilon+\delta)$ and $\varphi_M(\z+2\epsilon,\z+2\epsilon+2\delta)$ are isomorphisms,
\item for all $\z\in U_N$, $\varphi_M(\z+\epsilon,\z+\epsilon+\delta)$ and $\varphi_N(\z+2\epsilon,\z+2\epsilon+2\delta)$ are isomorphisms.
\end{enumerate*}
Then $M$ and $N$ are $\epsilon$-interleaved.
\end{lem}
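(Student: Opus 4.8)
The plan is to upgrade the given $(\epsilon+\delta)$-interleaving morphisms $f:M\to N(\epsilon+\delta)$, $g:N\to M(\epsilon+\delta)$ into genuine $\epsilon$-interleaving morphisms $f':M\to N(\epsilon)$, $g':N\to M(\epsilon)$ by ``un-shifting'' $f$ and $g$ at the grades that matter — namely the grades occurring in a minimal presentation of $M$, resp.\ $N$, which by Theorem~\ref{MinimalPresentationTheorem} lie in $U_M$, resp.\ $U_N$. The hypotheses say exactly that the transition maps of $N$ (resp.\ $M$) are isomorphisms at those grades, shifted by $\vec\epsilon$, and that the transition maps of $M$ (resp.\ $N$) are isomorphisms at those grades shifted by $2\vec\epsilon$; the former lets us define $f'$ and $g'$, and the latter lets us check the interleaving identities.

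First I would fix minimal presentations $\langle G_M\mid R_M\rangle$ of $M$ and $\langle G_N\mid R_N\rangle$ of $N$, with quotient maps $\rho_M,\rho_N$. For each generator $w\in G_M$ of grade $z=gr(w)\in U_M$ we have $f(\rho_M(w))\in N(\epsilon+\delta)_z=N_{z+\vec\epsilon+\vec\delta}$, and since $\varphi_N(z+\vec\epsilon,z+\vec\epsilon+\vec\delta)$ is an isomorphism by the second hypothesis, I set $f'(w)$ to be its preimage, an element of $N_{z+\vec\epsilon}=N(\epsilon)_z$; extending $P_n$-linearly gives $\tilde f':\langle G_M\rangle\to N(\epsilon)$. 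By the definition of the shift functor, $\varphi_{N(\epsilon)}(\delta)$ restricts at grade $z$ to $\varphi_N(z+\vec\epsilon,z+\vec\epsilon+\vec\delta)$, so $\varphi_{N(\epsilon)}(\delta)\circ\tilde f'$ and $f\circ\rho_M$ agree on $G_M$, hence everywhere. Consequently, for $r\in R_M$ of grade $z'\in U_M$ we get $\varphi_{N(\epsilon)}(\delta)(\tilde f'(r))=f(\rho_M(r))=0$, and since $\varphi_{N(\epsilon)}(\delta)$ is injective at grade $z'$ (again by the second hypothesis) we conclude $\tilde f'(r)=0$. Thus $\tilde f'$ descends to a morphism $f':M\to N(\epsilon)$ with $f'\circ\rho_M=\tilde f'$, and hence satisfying the identity $\varphi_{N(\epsilon)}(\delta)\circ f'=f$. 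The construction of $g':N\to M(\epsilon)$ with $\varphi_{M(\epsilon)}(\delta)\circ g'=g$ is identical, using the third hypothesis.

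It then remains to verify $g'(\epsilon)\circ f'=\varphi_M(2\epsilon)$ (the identity $f'(\epsilon)\circ g'=\varphi_N(2\epsilon)$ being symmetric). The key point is not to compute $g'(\epsilon)\circ f'$ directly, but to post-compose with $\varphi_{M(2\epsilon)}(2\delta)$. Using the two identities just established, naturality of transition morphisms, and functoriality of the shift functors, one obtains
\begin{align*}
\varphi_{M(2\epsilon)}(2\delta)\circ g'(\epsilon)\circ f'
&= \varphi_{M(2\epsilon+\delta)}(\delta)\circ g(\epsilon)\circ f'
= g(\epsilon+\delta)\circ\varphi_{N(\epsilon)}(\delta)\circ f'\\
&= g(\epsilon+\delta)\circ f
= \varphi_M(2\epsilon+2\delta)
= \varphi_{M(2\epsilon)}(2\delta)\circ\varphi_M(2\epsilon).
\end{align*}
Now $g'(\epsilon)\circ f'$ and $\varphi_M(2\epsilon)$ are both morphisms $M\to M(2\epsilon)$, hence determined by their values on the generators $w\in G_M$; on such a $w$ of grade $z\in U_M$ both values lie in $M(2\epsilon)_z=M_{z+2\vec\epsilon}$ and have the same image under $\varphi_{M(2\epsilon)}(2\delta)$, which at grade $z$ is the isomorphism $\varphi_M(z+2\vec\epsilon,z+2\vec\epsilon+2\vec\delta)$ of the second hypothesis and in particular injective. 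Hence the two morphisms agree on all generators, so $g'(\epsilon)\circ f'=\varphi_M(2\epsilon)$; the other identity follows the same way from the third hypothesis. Therefore $f',g'$ are $\epsilon$-interleaving morphisms and $M$ and $N$ are $\epsilon$-interleaved.

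I expect the main obstacle to be purely bookkeeping: correctly tracking the various shift functors, shifted transition morphisms, and naturality squares in the displayed chain of equalities, and checking the compatibilities $\varphi_{M(\epsilon)}(\delta)(\epsilon)=\varphi_{M(2\epsilon)}(\delta)$, $g(\delta)(\epsilon)=g(\epsilon+\delta)$, and so on. The only genuinely non-formal step is the device of verifying the interleaving identities after applying $\varphi_{M(2\epsilon)}(2\delta)$ (resp.\ $\varphi_{N(2\epsilon)}(2\delta)$): this is exactly where the ``$2\epsilon$'' halves of the second and third hypotheses are used, and it reduces the verification to equality of morphisms on a generating set, which is automatic from the construction of $f'$ and $g'$.
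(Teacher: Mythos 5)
Your proof is correct, and it takes a genuinely different (and in places slicker) route than the paper's. The paper constructs $\tilde f_z$ and $\tilde g_z$ explicitly at \emph{every} grade $z\in\R^n$ by pulling the problem back along the floor map $fl_M$ to the critical grades $U_M$, then verifies by hand, via Lemma~\ref{FirstIsomorphismLemma}/\ref{SecondConsequence} and two explicit diagram chases, first that the pointwise-defined collection is a morphism, and then that the interleaving identities hold. Your argument instead exploits the universal property of the free cover: defining $f'$ only on a minimal generating set $G_M$ (whose grades lie in $U_M$), extending freely, and observing that killing $R_M$ (again, grades in $U_M$) is automatic by injectivity of $\varphi_{N(\epsilon)}(\delta)$ at those grades. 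This makes ``$f'$ is a well-defined morphism $M\to N(\epsilon)$ with $\varphi_{N(\epsilon)}(\delta)\circ f'=f$'' essentially formal, whereas in the paper the analogous facts are checked grade by grade. The real divergence is in the verification step: the paper chases two large diagrams involving $fl_M$ and $fl_N$, while you post-compose with the injection $\varphi_{M(2\epsilon)}(2\delta)$ and reduce to a one-line computation using only the defining relation $\varphi_{N(\epsilon)}(\delta)\circ f'=f$, the naturality square for $g$, and the hypothesis $g(\epsilon+\delta)\circ f=\varphi_M(2\epsilon+2\delta)$, then cancel the post-composition on the generating set $\rho_M(G_M)$ using the ``$2\epsilon$'' part of hypothesis (2). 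Both proofs use the hypotheses in the same two places and to the same effect; yours trades the pointwise bookkeeping (and the floor function) for the universal property of free modules and a cancellation trick, and is arguably the cleaner of the two. The one detail worth making explicit when writing it up: when you say $g'(\epsilon)\circ f'$ and $\varphi_M(2\epsilon)$ are ``determined by their values on the generators $w\in G_M$,'' you of course mean their values on the images $\rho_M(w)\in M$, which form a generating set of $M$ precisely because the presentation is minimal.
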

\begin{proof} Let $f:M\to N(\epsilon+\delta)$ and $g:N\to M(\epsilon+\delta)$ be interleaving morphisms. 

We define $\epsilon$-interleaving morphisms 
\begin{align*}
{\tilde f}&:M\to N(\epsilon)\\
{\tilde g}&:N\to M(\epsilon)
\end{align*}
by specifying $\tilde f_\z:M_\z\to N_{\z+\epsilon}$ and $\tilde g_\z:N_\z\to M_{\z+\epsilon}$ for each $\z\in \R^n$.

First, for $\z\in U_M$ define \[{\tilde f}_\z=\varphi^{-1}_N(\z+\epsilon,\z+\epsilon+\delta) \circ f_\z.\]  Then for arbitrary $\z\in \R^n$ such that $\fl_M(\z)\in \R^n$ define \[{\tilde f}_\z=\varphi_N(\fl_M(\z)+\epsilon,\z+\epsilon)\circ {\tilde f}_{\fl_M(\z)} \circ \varphi^{-1}_M(\fl_M(\z),\z).\]  (Note that $\varphi^{-1}_M(\fl_M(\z),\z)$ is well defined by Lemma~\ref{SecondConsequence}.)  Finally, for $\z\in \R^n$ such that $\fl_M(\z)\not\in \R^n$, define ${\tilde f}_\z=0$.  (If $\fl_M(\z)\not\in \R^n$ then $M_\z=0$, so this last part of the definition is reasonable.)

Symmetrically, for $\z\in U_N$ define \[{\tilde g}_\z=\varphi^{-1}_M(\z+\epsilon,\z+\epsilon+\delta) \circ g_\z,\] and for arbitrary $\z\in \R^n$ such that $\fl_N(\z)\in\R^n$ define \[{\tilde g}_\z=\varphi_M(\fl_N(\z)+\epsilon,\z+\epsilon)\circ {\tilde g}_{\fl_N(\z)} \circ \varphi^{-1}_N(\fl_N(\z),\z).\]  For $\z\in \R^n$ s.t. $\fl_N(\z)\not\in \R^n$, define ${\tilde g}_\z=0$.   

We need to check that ${\tilde f},{\tilde g}$ as thus defined are in fact morphisms.  We perform the check for ${\tilde f}$; the check for ${\tilde g}$ is the same.

For $\y\leq \b\in \R^n$ such that $\fl_M(\y)\not\in \R^n$, then since $M_\y=0$, it's clear that \[{\tilde f}_\b \circ \varphi_M(\y,\b)=0=\varphi_N(\y+\epsilon,\b+\epsilon)\circ {\tilde f}_\y.\]

For $\y\leq \b\in \R^n$ such that $\fl_M(\y)\in \R^n$, the equality \[{\tilde f}_\b \circ \varphi_M(\y,\b)=\varphi_N(\y+\epsilon,\b+\epsilon)\circ {\tilde f}_\y\] is immediate from the commutativity of the following diagram; in this diagram and those that follow, unlabeled edges represent transition maps, and edges labeled `$\cong$' represent the inverses of transition maps which are invertible by assumption.
\[
\xymatrix{
M_\y\ar[r]\ar@/_3.5pc/[dddd]_{\tilde f_\y}\ar[d]_\cong                                     &M_\b\ar@/^3.5pc/[dddd]^{\tilde f_\b}\ar[d]^\cong\\ 
M_{\fl_M(\y)}\ar[r]\ar[d]_{f_{\fl_M(\y)}}                                  &M_{\fl_M(\b)}\ar[d]^{f_{\fl_M(\b)}}\\  
N_{\fl_M(\y)+\epsilon+\delta}\ar[r]\ar[d]_\cong           &N_{\fl_M(\b)+\epsilon+\delta}\ar[d]^\cong\\  
N_{\fl_M(\y)+\epsilon}\ar[r]\ar[d]                           &N_{\fl_M(\b)+\epsilon}\ar[d]\\  
N_{\y+\epsilon}\ar[r]                              &N_{\b+\epsilon}
}
\]

To finish the proof of the lemma, we need to check that ${\tilde g}(\epsilon)\circ {\tilde f}=\varphi_M^{2 \epsilon}$ and ${\tilde f}(\epsilon)\circ {\tilde g}=\varphi_N^{2 \epsilon}$.  We perform the first check;  the second check is the same.

For $\z\in \R^n$, if $\fl_M(\z)\not\in \R^n$ then since $M_\z=0$, \[{\tilde g}_{\z+\epsilon}\circ {\tilde f}_\z=0=\varphi_M(\z,\z+2\epsilon).\]

The verification that this also holds for $\z$ with $\fl_M(\z)\in \R^n$ is a large diagram chase, which we break up into two smaller diagram chases: We first verify the result for $\z\in U_M$.  We'll then use this special case to verify the result for arbitrary $\z\in \R^n$ with $\fl_M(\z)\in \R^n$.  

For $\z\in U_M$ we obtain the result from the commutativity of the following diagram:
\[
\xymatrix{
M_\z\ar[rr]^{f_\z}\ar[ddr]\ar@/^2pc/[rrr]^{\tilde f_\z} \ar@/_5.5pc/[ddddrrr] & &N_{\z+\epsilon+\delta}\ar[ddl]^{g_{\z+\epsilon+\delta}}\ar[r]^{\cong} &N_{\z+\epsilon} \ar[ddl]_{g_{\z+\epsilon}}\ar@/^5pc/[dddd]^{\tilde g_{\z+\epsilon}}\ar[d]^{\cong}\\
& & & N_{\fl_N(\z+\epsilon)}\ar[d]^{g_{\fl_N(\z+\epsilon)}}\\ 
& M_{\z+2\epsilon+2\delta}\ar[ddrr]_\cong &M_{\z+2\epsilon+\delta}\ar[l] &M_{\fl_N(\z+\epsilon)+\epsilon+\delta}\ar[l]\ar[d]^{\cong}\\  
& & & M_{\fl_N(\z+\epsilon)+\epsilon}\ar[d]\\
& & & M_{\z+2\epsilon}
}
\]

Then, for arbitrary $\z\in \R^n$ with $\fl_M(\z)\in \R^n$, we have, using that ${\tilde g}$ is a morphism, that the following diagram commutes:
\[
\xymatrixcolsep{3pc}
\xymatrix{
M_\z\ar@/^2.5pc/[rrr]^{\tilde f_\z}\ar@/_1.5pc/[rrrdd]\ar[r]^\cong &M_{\fl_M(\z)}\ar[r]^{\tilde f_{\fl_M(\z)}}\ar[dr] &N_{\fl_M(\z)+\epsilon}\ar[r]\ar[d]^{\tilde g_{\fl_M(\z)+\epsilon}} &N_{\z+\epsilon}\ar[dd]^{\tilde g_{\z+\epsilon}} \\
& & M_{\fl_M(\z)+2\epsilon}\ar[dr]\\
& & & M_{\z+2\epsilon}
}
\]
This gives that ${\tilde g}_{\z+\epsilon}\circ {\tilde f}_\z=\varphi_M(\z,\z+2\epsilon )$, as we wanted.\end{proof}


  \section{Discussion}\label{DiscussionSection}

\nparagraph{Computation} 
The results of this paper establish that $d_I$ is, in several senses, a very well behaved generalization of $d_B$ to the multidimensional setting.  
Insofar as $d_I$ is in fact a good choice of distance on multidimensional persistence modules, the question of if and how it can be computed or approximated is interesting and potentially significant from the standpoint of applications.  As noted in the introduction, this question remains open.    One potential application of interleaving distance computations is to shape matching, where distances between multidimensional persistence modules have already been applied~\cite{biasotti2008multidimensional}.  I also imagine that in statistical settings, computation of $d_I$ could be useful in resampling methods for computing confidence regions for estimates of multi-D persistent homology.

\nparagraph{Questions Related to Universality}

Our investigation of the universality properties of $d_I$ raises several questions:

\begin{enumerate}
\item Our universality result Theorem~\ref{MainOptimality} demonstrates that $d_I$ is $i$-universal, as defined in Section~\ref{Sec:OptimalityDefs}, when $k$ is a prime field and $i\geq 1$.  We have hypothesized (Conjecture~\ref{Conj:ZeroOptimality}) that in fact $d_I$ is $i$-universal for arbitrary $k$ and $i \geq 0$.  Can we prove this?
\item Our definition of universality is induced by a particular choice of definition of the stability of a pseudometric on $[\obj \nmod]$, given in Section~\ref{MultidimensionalStabilitySection}.  By varying our definition of stability, we obtain different definitions of universality, and thus are led to a number of interesting questions about the universality of pseudometrics analogous to those considered in this paper.  To give one example, say a pseudometric $d$ on $[\obj \onemod]$ is {\it GH-stable} if for each $i\geq 0$ and pair of finite metric spaces $X,Y$, \[d(H_i \Rips(X), H_i \Rips(Y))\leq d_{GH}(X,Y),\] where $d_{GH}$ denotes the Gromov-Hausdorff distance.  Say a pseudometric $d$ is {\it GH-universal} if it is GH-stable and for every GH-stable pseudometric $d'$, $d'(M,N)\leq d(M,N)$ for all $1$-modules $M,N$ such that $\exists$ $i\geq 0 $ with $M,N\in \im H_i\Rips(\cdot)$.
It was shown in \cite{chazal2009gromov} that $d_I$ is GH-stable.  Is it true that $d_I$ is GH-universal?

\item\label{Generalizations} Can we obtain results analogous to our universality result Theorem~\ref{MainOptimality} for more general types of persistent homology modules?  For instance, can we prove a result analogous to Theorem~\ref{MainOptimality} for levelset zig-zag persistence \cite{carlsson2009zigzag}?

\item A question related to question~\ref{Generalizations}: is there a way of algebraically reformulating the bottleneck distance for zig-zag persistence modules as an analogue of $d_I$ in such way that the definition generalizes to a larger classes of commutative quiver representations \cite{derksen2005quiver}?
\end{enumerate}

\subsection*{Acknowledgments}
The first version of this paper 
was written while I was a graduate student.  Discussions with my Ph.D. adviser Gunnar Carlsson catalyzed the research presented here in several ways.  In addition, Gunnar served as a patient and helpful sounding board for the ideas of this paper.  I thank him for his support and guidance.

Thanks to Henry Adams, Peter Bubenik, Patrizio Frosini, Peter Landweber, Dmitriy Morozov, and the anonymous referees for useful corrections and helpful feedback on this work.  

Parts of the exposition in \cref{Sec:Background,Sec:IsometrySection} benefited from edits done jointly with Ulrich Bauer on closely related material in \cite{bauer2014induced}.  

The main result of William Crawley-Boevey's paper \cite{crawley2012decomposition} plays an important role in the present version of this work.  
I thank Bill for writing his paper and both Bill and Vin de Silva for enlightening discussions about structure theorems for $\R$-graded persistence modules.  

Thanks to Stanford University, the Technion, the Institute for Advanced Study, and the Institute for Mathematics and its Applications for their support  hospitality during the writing and revision of this paper.  This work was supported by ONR grant N00014-09-1-0783 and NSF grant DMS-1128155.  Any opinions, findings, and conclusions
or recommendations expressed in this material are those of the author and do not necessarily reflect the views of the National Science Foundation.

\bibliographystyle{plain}	
\bibliography{Revised_Paper_Refs}	

 \end{document}